\documentclass[12pt]{article}
\usepackage[margin=1in]{geometry}
\usepackage{amsmath,amssymb,amsthm,mathtools}
\usepackage{booktabs,longtable,array,multirow,tabularx}
\usepackage{graphicx}
\usepackage{float}
\usepackage{microtype}
\usepackage{xcolor}
\usepackage{hyperref}
\usepackage[authoryear,round]{natbib}
\usepackage{enumitem}
\usepackage{tikz}
\usetikzlibrary{arrows.meta,positioning,shapes.geometric,calc}
\usepackage{setspace}
\usepackage{caption}
\usepackage{subcaption}
\usepackage{pdflscape}
\newcounter{algorithm}
\renewcommand{\thealgorithm}{\arabic{algorithm}}
\newenvironment{algorithm}[1][]{%
  \refstepcounter{algorithm}\begin{center}\begin{minipage}{0.90\textwidth}\small%
  \hrule\smallskip%
}{%
  \smallskip\hrule\end{minipage}\end{center}%
}
\newlength{\algindent}
\newenvironment{algorithmic}[1][]{%
  \small
  \setlength{\leftskip}{0pt}%
}{%
  \normalsize
}

\newcommand{\algorithmtitle}[1]{%
  \noindent\textbf{Algorithm \thealgorithm. #1}%
  \par\smallskip\hrule\smallskip
}

\newcommand{\Require}{%
  \par\noindent\textbf{Require:} %
}

\newcommand{\State}{%
  \par\noindent
}

\newcommand{\For}[1]{%
  \par\noindent\textbf{For} #1\textbf{:}\par
  \addtolength{\leftskip}{\algindent}%
}

\newcommand{\EndFor}{%
  \par
  \addtolength{\leftskip}{-\algindent}%
  \noindent\textbf{End for.}\par
}

\newcommand{\If}[1]{%
  \par\noindent\textbf{If} #1\textbf{:}\par
  \addtolength{\leftskip}{\algindent}%
}

\newcommand{\Else}{%
  \par
  \addtolength{\leftskip}{-\algindent}%
  \noindent\textbf{Else:}\par
  \addtolength{\leftskip}{\algindent}%
}

\newcommand{\EndIf}{%
  \par
  \addtolength{\leftskip}{-\algindent}%
  \noindent\textbf{End if.}\par
}

\newcommand{\Repeat}{%
  \par\noindent\textbf{Repeat:}\par
  \addtolength{\leftskip}{\algindent}%
}

\newcommand{\Until}[1]{%
  \par
  \addtolength{\leftskip}{-\algindent}%
  \noindent\textbf{Until} #1.\par
}

\newcommand{\Return}{%
  \par\noindent\textbf{Return:} %
}

\definecolor{uplbmaroon}{HTML}{7B1113}
\definecolor{uplbgreen}{HTML}{014421}
\definecolor{gold}{HTML}{C9A227}
\hypersetup{colorlinks=true,linkcolor=uplbmaroon,citecolor=uplbgreen,urlcolor=blue}

\newtheorem{theorem}{Theorem}
\newtheorem{proposition}{Proposition}
\newtheorem{corollary}{Corollary}

\theoremstyle{definition}
\newtheorem{definition}{Definition}
\newtheorem{remark}{Remark}

\newcommand{\Reff}{\mathcal R_{\mathrm e}}
\newcommand{\dd}{\mathrm d}

\newcommand{\Prb}{\mathbb P}

\title{\textbf{How Simple Models Can Reproduce}\\
\textbf{Multiple Epidemic Waves}\\
Peak Topology, Minimal Dynamic Complexity,\\
and Inference for Time-Varying Transmission}
\author{Jomar Fajardo Rabajante\\
Institute of Mathematical Sciences, University of the Philippines Los Ba\~nos\\
Laguna, Philippines 4031\\
jfrabajante@up.edu.ph}
\date{}

\begin{document}
\maketitle

\begin{abstract}
Multiple epidemic waves are often taken as evidence that simple compartmental models are structurally inadequate. Yet the more informative question is not whether a simple model can reproduce multiple peaks---a possibility already established in the literature---but how much time-varying structure is minimally required to preserve observed epidemic-wave patterns, and what can legitimately be inferred from such a representation. Using the susceptible--infectious--removed (SIR) model as a canonical case, we develop a framework for {peak-preserving dynamic parsimony} that links dynamical systems, inverse problems, statistical inference, and infectious-disease modeling. We establish exact multi-wave representability under time-varying transmission, characterize conditions for epidemic peaks and resurgence, and demonstrate the incidence-only non-identifiability of time-varying transmission and removal rates. Our main methodological contribution is {minimal peak-preserving complexity} (MPPC): the smallest dimension of a positive time-varying transmission function whose posterior-predictive trajectories preserve epidemiologically salient features, including peak count, timing, height, trough depth, and acceptable whole-curve agreement. We estimate MPPC using peak-aware approximate Bayesian computation and evaluate the framework through six reproducible stress tests, a Philippine COVID-19 reported-case proof of concept, and a short-term future-knot projection analysis. Across diverse data-generating mechanisms, low-dimensional time-varying SIR models can preserve the topology of aggregate epidemic waves with surprisingly modest dynamic complexity. However, accurate trajectory compression does not imply recovery of the underlying mechanism, nor does it guarantee reliable long-term counterfactual prediction. The framework therefore separates four questions that are often conflated: representability, minimal dynamic complexity, mechanistic identifiability, and projection validity, providing a general basis for parsimonious modeling of complex epidemic trajectories.
\end{abstract}

\noindent\textbf{Keywords:} model reduction; dynamic parsimony; inverse problems; structural identifiability; infectious disease surveillance; effective reproduction number.

\noindent\textbf{Online Shiny app:} \url{https://jfrabajanteuplb.shinyapps.io/MULTIPEAK}

\section{Introduction}
The classical susceptible--infectious--removed (SIR) model remains one of the most useful low-dimensional representations of epidemic dynamics \citep{kermack1927,brauer2012,keeling2008}. Its constant-parameter form is intentionally restrictive: with a closed susceptible population and a fixed removal rate, it produces a single broad epidemic episode rather than repeated resurgences. Real surveillance series, however, frequently display several peaks. These peaks may arise from changes in behavior, interventions, seasonality, pathogen evolution, spatial seeding, heterogeneous contact patterns, school calendars, vaccination, waning immunity, reporting, or combinations of these mechanisms. The broader reproduction-number literature also makes clear that epidemic growth is intrinsically time-dependent and must be interpreted relative to susceptible depletion and the observation process \citep{fraser2007,wallinga2004,cori2013,diekmann2010}.

A common modeling response is to add mechanistic structure until the model can generate the observed waves. That approach is appropriate when the scientific objective is to estimate or intervene on the added mechanisms. It is not the only valid approach when the objective is trajectory reconstruction, short-horizon forecasting, surveillance summarization, or low-dimensional compression. A simple model supplied with a time-dependent effective transmission rate can absorb the net effect of omitted mechanisms. Inverse epidemic modeling has long shown that time-dependent transmission can be reconstructed from epidemic trajectories and incidence data \citep{pollicott2012,mummert2013}. Recent work further demonstrates that important temporal features of reconstructed transmission can remain robust across multiple compartmental structures and incidence formulations \citep{wang2026inverse}. Flexible time-varying transmission has also been implemented with splines, stochastic processes, change points, intervention-linked reproduction numbers, and other bases in epidemic workflows \citep{bouman2024,dehning2020,flaxman2020,giordano2020}, while multiple Gaussian pulses have been used specifically to reproduce multiple COVID-19 waves \citep{setianto2023}. These results establish an important baseline: neither time-varying $\beta(t)$ nor exact inverse reconstruction is, by itself, a sufficient methodological contribution of this paper. We move beyond that baseline to address another fundamental challenge.

The unresolved question is therefore one of dynamic parsimony: not whether a sufficiently flexible forcing can reproduce a complex trajectory, but how much forcing complexity is minimally needed to preserve the features that matter. For multi-wave epidemics, those features are naturally topological and geometrical---the number of waves, their timing and relative heights, the depth of intervening troughs, and their relationship to the full epidemic trajectory. This turns a familiar curve-fitting problem into a smallest-adequate-model problem. It also exposes an inferential question: when does the fitted $\beta(t)$ summarize the net forcing required by the data, and when can it be interpreted as evidence about a biological, behavioral, or policy mechanism?

These questions are especially relevant to likelihood-free and simulator-based inference, for which approximate Bayesian computation (ABC) provides a natural framework. Foundational developments include rejection/regression ABC, likelihood-free sequential Monte Carlo, and ABC-SMC for dynamical systems \citep{beaumont2002,sisson2007,sisson2009correction,toni2009,delmoral2012,sunnaker2013}, with epidemic-specific tutorials and reviews given by \citet{kypraios2017,minter2019,beaumont2019,li2025}. Peak magnitude and peak timing have already been used as outbreak-level summary statistics in ABC calibration, including chikungunya applications \citep{meyer2023}. Accordingly, we do not claim the first use of peak summaries in ABC. Our contribution is to elevate peak topology to an explicit posterior-predictive adequacy criterion and use that criterion to select the minimum dynamic dimension of an effective transmission function. This creates a direct bridge between model reduction, functional complexity, posterior predictive checking, and epidemic-wave interpretation.

The framework is organized around five linked contributions:
\begin{enumerate}[label=(C\arabic*)]
    \item an incidence-form exact reconstruction result, used as a transparent mathematical baseline for multi-wave representability and explicitly connected to the inverse-transmission literature;
    \item a structural incidence-only non-identifiability result for $\beta(t)$ and $\gamma$, clarifying why perfect trajectory reconstruction need not identify epidemiological mechanism;
    \item peak-capacity results for prevalence and incidence, including a resurgence-forcing inequality that quantifies when increasing effective transmission must overcome susceptible depletion;
    \item the principal methodological contribution, {minimal peak-preserving complexity} (MPPC), which converts unrestricted time-varying transmission fitting into a nested smallest-adequate-complexity problem; and
    \item a peak-aware ABC and posterior-predictive workflow that targets peak count, timing, height, trough depth, cumulative timing, and whole-curve agreement.
\end{enumerate}

The same formulation has different interpretations across disciplines. For dynamical-systems and applied-mathematics readers, MPPC is a model-reduction question about the complexity of forcing required by a low-dimensional state representation. For statisticians, it is a posterior-predictive adequacy problem across nested functional dimensions under partial identifiability. For epidemiologists and infectious-disease modelers, it distinguishes reproduction of epidemic waves from recovery of the mechanisms that generated them. For surveillance and public-health applications, it provides an auditable way to ask whether additional compartments are required by the data or whether a simpler model with explicitly interpreted effective forcing is sufficient for the intended task.

The central message is deliberately cautious: a simple SIR model may be an excellent trajectory compressor even when it is a poor mechanistic description. Multi-wave fit alone neither validates homogeneous mixing nor identifies the causes of resurgences; conversely, the presence of multiple peaks alone does not prove that a higher-dimensional compartmental model is necessary.

\section{Simple SIR dynamics with time-varying transmission}
Consider the closed-population SIR system
\begin{align}
\dot S(t) &= -\beta(t)\frac{S(t)I(t)}{N}, \label{eq:sirS}\\
\dot I(t) &= \beta(t)\frac{S(t)I(t)}{N}-\gamma I(t), \label{eq:sirI}\\
\dot R(t) &= \gamma I(t), \label{eq:sirR}
\end{align}
where $N=S+I+R$ is constant, $\gamma>0$, and $\beta(t)\ge 0$. Define latent incidence
\begin{equation}
    c(t)=\beta(t)\frac{S(t)I(t)}{N}=-\dot S(t),
    \label{eq:incidence}
\end{equation}
and the effective reproduction number
\begin{equation}
    \Reff(t)=\frac{\beta(t)}{\gamma}\frac{S(t)}{N}.
    \label{eq:reff}
\end{equation}
Then
\begin{equation}
    \dot I(t)=\gamma I(t)\{\Reff(t)-1\}.
    \label{eq:Igrowth}
\end{equation}
The distinction between prevalence $I(t)$ and incidence $c(t)$ is essential: surveillance peaks usually concern new reported cases, while the standard threshold $\Reff=1$ directly governs the growth or decline of prevalence.

The model has only three boxes: people who can still be infected, people who are currently infectious, and people who have left the infectious state. A changing $\beta(t)$ is not interpreted automatically as one single biological cause. It is a compact dial summarizing whatever made transmission easier or harder at each time: behavior, policy, seasonality, variants, testing, reporting, mobility, or unmodeled heterogeneity.

Table~\ref{tab:notation} presents the notation used throughout the mathematical development. Rates are per unit time; in the Shiny app implementation the default unit is one day.

\footnotesize
\begin{longtable}{@{}>{\raggedright\arraybackslash}p{0.34\textwidth}>{\raggedright\arraybackslash}p{0.52\textwidth}>{\raggedright\arraybackslash}p{0.10\textwidth}@{}}
\caption{States, parameters, observed quantities, and inference notation. A hat ($\widehat{\cdot}$) denotes a fitted or calibrated reconstruction; it is not used for an exact incidence oracle.}\label{tab:notation}\\
\toprule
Symbol & Definition & Units / role\\
\midrule
\endfirsthead
\toprule
Symbol & Definition & Units / role\\
\midrule
\endhead
$t,t_i,T,\Delta t$ & Continuous time, observation times, end of the analysis window, and observation interval. & time\\
$S(t),I(t),R(t)$; $S_0,I_0,R_0$ & Susceptible, infectious, and removed state counts; subscript $0$ denotes an initial condition. & persons\\
$S_c(t),I_{c,\gamma}(t),R_c(t)$ & Exact oracle susceptible, infectious, and removed trajectories reconstructed from target incidence $c(t)$ for a fixed $\gamma$. & persons\\
$N$ & Closed accounting population, $N=S+I+R$. & persons\\
$c(t),c_t,C_t$ & Latent incidence rate and its discrete-time rate/count forms; $c(t)=\beta(t)S(t)I(t)/N=-\dot S(t)$. & persons/time; cases\\
$y_i,Y_t,Y_t^{\mathrm{rep}}$ & Observed incidence at $t_i$, a simulated observation, and a posterior-predictive replicate. & cases/interval\\
$\beta(t),\beta_t,\beta_K(t), \beta_{c,\gamma}(t)$ & Effective transmission rate, its discrete form, $K$-knot form, and exact oracle quantity. It is an effective forcing and can absorb omitted mechanisms. & 1/time\\
$\gamma,\rho,\phi$ & Removal rate; reporting fraction; and negative-binomial dispersion, when an observation model is used. & 1/time; proportion; dispersion\\
$\Reff(t),q(t)$ & Effective reproduction number, $\beta(t)S(t)/(\gamma N)$, and its logarithm, $q(t)=\log\Reff(t)$. & dimensionless\\
$\tau_j,\theta_j,w_j(t),\ell_K(t),B_j(t)$ & Knot time, log-transmission value, linear interpolation weight, interpolated log-transmission, and generic log-transmission basis function. & time; log-rate; basis quantities\\
$K,K^\star,\mathcal M_K,A_K$ & Knot dimension; minimal peak-preserving complexity; SIR model with $K$ basis terms; and posterior-predictive adequacy probability. & integer; model; probability\\
$m(y),t_j(y),h_j(y),v_j(y),Q_p(y),\mathcal P$ & Detected peak count; ordered peak times, heights, and intervening trough heights; cumulative-incidence time quantile; and the set of quantile levels. & integer; time; cases; set\\
$D,d_{\bullet}$ & Peak-aware discrepancy and its components $d_{\mathrm{curve}}$, $d_{\mathrm{count}}$, $d_{\mathrm{time}}$, $d_{\mathrm{height}}$, $d_{\mathrm{trough}}$, and $d_{\mathrm{cum}}$. & scaled distances\\
$z,\ell(x),\widetilde d_{\mathrm{curve}},m_\wedge,v_\wedge$
& Comparison series; $\log(1+x)$ transform; scale-normalized curve discrepancy; matched-peak and matched-trough denominators.
& auxiliary discrepancy notation\\
$w_{\bullet},\delta_{\bullet}$ & Discrepancy weights $w_c,w_n,w_t,w_h,w_v,w_q$ and adequacy tolerances $\delta_c,\delta_t,\delta_h,\delta_v$. & nonnegative constants\\
$\vartheta,\pi_K,M$ & ABC parameter vector, prior at complexity $K$, and particle count. & inference quantities\\
$\epsilon_\ell,K_\ell,\ell,L$ & ABC stage tolerance, perturbation kernel, stage index, and number of stages. & inference quantities\\
$E(t),\sigma_E$ & Exposed state and progression rate in the SEIR generator. & persons; 1/time\\
$I_1,I_2,\beta_1,\beta_2,\iota(t)$ & Strain-specific infectious states and transmission rates, and strain-2 importation/transfer forcing in the two-strain generator. & persons; 1/time; persons/time\\
$\mu(t),H_j,\psi_j,s_j$ & Controlled-wave mean incidence and the height, centre, and width of its Gaussian pulse $j$. & cases/time; cases/time, time, time\\
$\varepsilon,\sigma_c$ & Positive incidence scaling constant and Gaussian pulse width in the representability construction. & dimensionless; time\\
$F_{1,\beta},F_{2,\beta}$ & Discrete log-transmission abruptness and roughness indices defined in Eqs.~\eqref{eq:abruptness-index}--\eqref{eq:roughness-index}. & log-rate/time; log-rate\\
$\theta_r^+,a,\sigma_\theta,\eta_r$ & Future log-transmission knot, estimated drift, innovation scale, and standard-normal innovation in the no-information projection. & log-rate; log-rate/step; log-rate; dimensionless\\
$\beta_{\mathrm{base}}(t),\beta_{\max},m_{\mathrm{var}}$
& No-information baseline transmission scenario; declared scenario ceiling; and assumed variant transmission multiplier.
& 1/time; 1/time; dimensionless\\
$r_{\mathrm{var}}(t),\beta_{\mathrm{var}}(t)$ & Variant replacement fraction and the conditional variant-informed transmission scenario. & proportion; 1/time\\
$\alpha_0,x_r^+,\alpha,u_r,\varphi,\sigma_u$ & Intercept, future covariates, covariate coefficients, residual process, autoregressive coefficient, and innovation scale in the covariate-informed projection. & regression/process quantities\\
$\alpha_{\mathrm{adeq}}$ & Posterior-predictive inadequacy probability used in the MPPC threshold $1-\alpha_{\mathrm{adeq}}$. & probability\\
\bottomrule
\end{longtable}
\normalsize

\section{Exact incidence representability}
The following result formalizes a simple but powerful fact: multiple peaks are not, by themselves, incompatible with SIR structure when transmission is permitted to vary.

\begin{theorem}[Incidence representability]
\label{thm:incidence-representability}
Let $c:[0,T]\to[0,\infty)$ be continuous. Let $N>0$, $S_0>0$, $I_0>0$, $N\ge S_0+I_0$, $\gamma\ge 0$, and assume
\begin{equation}
    \int_0^T c(u)\,\dd u < S_0.
    \label{eq:feasible_cumulative}
\end{equation}
Define
\begin{align}
S_c(t) &= S_0-\int_0^t c(u)\,\dd u, \label{eq:Sc}\\
I_{c,\gamma}(t) &= e^{-\gamma t}\left[I_0+\int_0^t e^{\gamma u}c(u)\,\dd u\right], \label{eq:Ic}\\
\beta_{c,\gamma}(t) &= \frac{N\,c(t)}{S_c(t)I_{c,\gamma}(t)}. \label{eq:betaoracle}
\end{align}
Then $S_c(t)>0$, $I_{c,\gamma}(t)>0$, $R_c(t)\ge 0$, $\beta_{c,\gamma}(t)\ge 0$, and the SIR system \eqref{eq:sirS}--\eqref{eq:sirR} with initial conditions $(S_0,I_0,N-S_0-I_0)$ has incidence exactly equal to $c(t)$ on $[0,T]$.
\end{theorem}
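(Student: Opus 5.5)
The proof is a direct verification: we define the candidate trajectories explicitly, check positivity, and then show that they solve the SIR system with the prescribed $\beta_{c,\gamma}$. The remaining work is to argue that this is \emph{the} solution, so that its incidence is exactly $c$.

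\textbf{Positivity.} Since $c\ge 0$ is continuous, $t\mapsto\int_0^t c$ is nondecreasing. Hence for every $t\in[0,T]$,
\[
S_c(t)\ge S_0-\int_0^T c(u)\,\dd u>0
\]
by \eqref{eq:feasible_cumulative}. For the infectious compartment, $I_{c,\gamma}(t)\ge e^{-\gamma t}I_0>0$ because the integrand $e^{\gamma u}c(u)$ is nonnegative. Together these give $\beta_{c,\gamma}(t)\ge 0$, and show that $\beta_{c,\gamma}$ is well defined and continuous on $[0,T]$. We set
\[
R_c(t)=N-S_c(t)-I_{c,\gamma}(t),
\]
which is equivalent to $R_c(t)=(N-S_0-I_0)+\gamma\int_0^t I_{c,\gamma}(u)\,\dd u$, once the derivative identity below is established. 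Since $N\ge S_0+I_0$ and $\gamma\ge0$, this gives $R_c\ge 0$.

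\textbf{Verification of the ODE.} By the fundamental theorem of calculus, which applies because $c$ is continuous, $\dot S_c=-c$. Differentiating \eqref{eq:Ic} gives
\[
\dot I_{c,\gamma}=-\gamma I_{c,\gamma}+c.
\]
By the definition \eqref{eq:betaoracle},
\[
c(t)=\beta_{c,\gamma}(t)\,S_c(t)\,I_{c,\gamma}(t)/N
\]
identically on $[0,T]$. Substituting this into the two derivative formulas yields \eqref{eq:sirS} and \eqref{eq:sirI}. Equation \eqref{eq:sirR} then follows from $R_c=N-S_c-I_{c,\gamma}$, and it also confirms the integral formula for $R_c$. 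The initial values are $S_c(0)=S_0$, $I_{c,\gamma}(0)=I_0$, and $R_c(0)=N-S_0-I_0$. By construction, the incidence $\beta_{c,\gamma}S_cI_{c,\gamma}/N$ equals $c$.

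\textbf{Uniqueness.} The main subtlety is to make sure that ``the SIR system with forcing $\beta_{c,\gamma}$'' has no other solution whose incidence could differ from $c$. With $\beta_{c,\gamma}$ continuous and bounded on the compact interval $[0,T]$, the right-hand side of \eqref{eq:sirS}--\eqref{eq:sirR} is continuous in $t$ and locally Lipschitz in $(S,I,R)$, uniformly in $t$. The Picard--Lindel\"of theorem therefore gives a unique solution on any interval of existence. Our constructed triple is a global solution on $[0,T]$, so it coincides with the unique solution. Hence the incidence of the SIR system is exactly $c(t)$ on $[0,T]$.
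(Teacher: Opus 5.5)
Your proposal is correct and follows essentially the same construction-and-verification route as the paper. The positivity bounds, the derivative checks for $S_c$ and $I_{c,\gamma}$, and $R_c\ge 0$ via $\dot R_c=\gamma I_{c,\gamma}\ge 0$ all match. Your Picard--Lindel\"of uniqueness step, valid because $\beta_{c,\gamma}$ is continuous and bounded on $[0,T]$, is an extra the paper omits, and it makes the claim that the forced SIR system has incidence $c$ unambiguous.
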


\begin{proof}
We verify the construction step by step.
\begin{enumerate}[label={Step \arabic*.},leftmargin=*]
\item {Recover the susceptible trajectory from incidence.}
Because incidence is the loss rate from susceptibility, any exact representation must satisfy $\dot S=-c$. Integrating from $0$ to $t$ gives
\[
S(t)=S_0-\int_0^t c(u)\,\dd u=:S_c(t).
\]
Since $c\ge0$ and $\int_0^T c(u)\,\dd u<S_0$, for every $t\le T$,
\[
S_c(t)\ge S_0-\int_0^T c(u)\,\dd u>0.
\]

\item {Recover the infectious trajectory for a fixed removal rate.}
If the target incidence is $c(t)$, the infectious balance equation must be
\[
\dot I(t)+\gamma I(t)=c(t),\qquad I(0)=I_0.
\]
Multiplying by the integrating factor $e^{\gamma t}$ gives
\[
\frac{\dd}{\dd t}\{e^{\gamma t}I(t)\}=e^{\gamma t}c(t).
\]
Integrating from $0$ to $t$ yields
\[
e^{\gamma t}I(t)-I_0=\int_0^t e^{\gamma u}c(u)\,\dd u,
\]
and therefore
\[
I(t)=e^{-\gamma t}\left[I_0+\int_0^t e^{\gamma u}c(u)\,\dd u\right]=I_{c,\gamma}(t).
\]
Because $I_0>0$ and the integral is nonnegative, $I_{c,\gamma}(t)>0$.

\item {Solve algebraically for the transmission rate.}
The SIR incidence identity requires
\[
c(t)=\beta(t)\frac{S(t)I(t)}{N}.
\]
Since $S_c(t)I_{c,\gamma}(t)>0$, solve uniquely at each time for
\[
\beta(t)=\frac{Nc(t)}{S_c(t)I_{c,\gamma}(t)}=: \beta_{c,\gamma}(t)\ge0.
\]

\item {Verify the susceptible equation.}
Using the definition above,
\[
-\beta_{c,\gamma}(t)\frac{S_c(t)I_{c,\gamma}(t)}{N}=-c(t)=\dot S_c(t),
\]
which is exactly \eqref{eq:sirS}.

\item {Verify the infectious equation.}
From Step 2,
\[
\dot I_{c,\gamma}(t)=c(t)-\gamma I_{c,\gamma}(t)
=\beta_{c,\gamma}(t)\frac{S_c(t)I_{c,\gamma}(t)}{N}-\gamma I_{c,\gamma}(t),
\]
which is \eqref{eq:sirI}.

\item {Recover and verify the removed state.}
Define
\[
R_c(t)=N-S_c(t)-I_{c,\gamma}(t).
\]
Then
\[
\dot R_c=-\dot S_c-\dot I_{c,\gamma}=c-(c-\gamma I_{c,\gamma})=\gamma I_{c,\gamma},
\]
so \eqref{eq:sirR} holds, and $R_c(0)=N-S_0-I_0\ge0$. Because $\dot R_c=\gamma I_{c,\gamma}\ge0$, it follows that $R_c(t)\ge0$ on $[0,T]$.

\item {Verify exact incidence.}
Finally,
\[
\beta_{c,\gamma}(t)\frac{S_c(t)I_{c,\gamma}(t)}{N}=c(t)
\]
for every $t\in[0,T]$. Hence the constructed SIR trajectory reproduces the prescribed incidence exactly.
\end{enumerate}
\end{proof}

If the reported case curve is feasible for the chosen accounting window, the SIR equations can be run backward in a bookkeeping sense to compute the transmission history that would exactly generate that curve. This does not mean the simple SIR model explains the epidemic mechanism; it means the model can serve as a transparent compression device.

\begin{corollary}[Arbitrarily many peaks]
\label{cor:arbitrary-peaks}
For every positive integer $m$, there exists a nonnegative $\beta(t)$ for which SIR incidence has at least $m$ separated local maxima on a finite interval, provided cumulative incidence remains below $S_0$.
\end{corollary}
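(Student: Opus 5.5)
The plan is to derive the corollary directly from Theorem~\ref{thm:incidence-representability} by exhibiting an explicit continuous, nonnegative target incidence with $m$ separated local maxima and small total mass. Fix $m\ge1$, an interval $[0,T]$ with $T>0$, admissible $N,S_0,I_0$ with $N\ge S_0+I_0$, and any $\gamma\ge0$. Choose centres $\psi_j=jT/(m+1)$, $j=1,\dots,m$, and a common width $\sigma_c>0$. Then set
\[
c(t)=\varepsilon\sum_{j=1}^m \exp\!\left\{-\frac{(t-\psi_j)^2}{2\sigma_c^2}\right\},\qquad t\in[0,T],
\]
with $\varepsilon>0$ a scaling constant. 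This matches the notation $\varepsilon,\sigma_c$ reserved in Table~\ref{tab:notation}.

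\textbf{Step 1 (feasibility).} The function $c$ is continuous and nonnegative, and $\int_0^T c\le \varepsilon mT$. Choosing $\varepsilon<S_0/(mT)$ therefore gives \eqref{eq:feasible_cumulative}. Theorem~\ref{thm:incidence-representability} then supplies $\beta_{c,\gamma}(t)\ge0$ defined by \eqref{eq:betaoracle}, whose SIR incidence equals $c$ exactly on $[0,T]$. Any scaling of $c$ preserves its critical-point structure, so the remaining work concerns only the shape of $c$.

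\textbf{Step 2 (the $m$ maxima).} This is the step needing care. A sum of Gaussians can merge neighbouring bumps if the width is too large relative to the spacing $h=T/(m+1)$. To avoid this, I would take $\sigma_c$ small relative to $h$, say $\sigma_c\le h/8$, and argue as follows. At $t=\psi_j$ the $j$th term equals $\varepsilon$. The other terms contribute at most $2\varepsilon\sum_{k\ge1}e^{-k^2h^2/(2\sigma_c^2)}$, which is below $\varepsilon\delta$ for a small $\delta$. At the midpoints $\psi_j\pm h/2$, every term is at most $\varepsilon e^{-h^2/(8\sigma_c^2)}$ times a similar geometric factor, which is also $\ll\varepsilon$. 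Hence $c(\psi_j)>\max\{c(\psi_j-h/2),c(\psi_j+h/2)\}$. By continuity, $c$ attains a maximum on $[\psi_j-h/2,\psi_j+h/2]$ at an interior point, and that point is a local maximum of $c$. The intervals are disjoint, which gives at least $m$ separated local maxima, with troughs between them.

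A cleaner alternative, if one prefers to avoid tail estimates, is to use compactly supported smooth bumps, or simply $c(t)=\varepsilon\{1-\cos(2\pi m t/T)\}+\varepsilon$ restricted appropriately. For instance, $c(t)=\varepsilon\{2+\sin(2\pi (m+\tfrac12)t/T)\}$ has at least $m$ strict interior local maxima, located explicitly at $\sin=1$, and I would likely adopt this to make Step 2 trivial. In either version, the conclusion follows by invoking Theorem~\ref{thm:incidence-representability}. The only genuine obstacle is guaranteeing that the local maxima survive superposition, and the trigonometric choice sidesteps it.
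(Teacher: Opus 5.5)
Your proposal is correct and follows the paper's route: you take a nonnegative target incidence built from separated Gaussian bumps, scale it by a small $\varepsilon$ so that cumulative incidence stays below $S_0$, and let Theorem~\ref{thm:incidence-representability} supply the nonnegative $\beta_{c,\gamma}$. Your center-versus-midpoint comparison with $\sigma_c\le h/8$ proves the separation of the $m$ maxima, which the paper only asserts (``for sufficiently separated centers''); your trigonometric target is an equally valid way to make that step trivial.
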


\begin{proof}
Fix $m$. Choose separated centers $0<t_1<\cdots<t_m<T$ and a width $\sigma_c>0$ small relative to the minimum separation. For positive amplitudes $a_j$, consider
\[
\widetilde c(t)=\sum_{j=1}^m a_j\exp\!\left[-\frac{(t-t_j)^2}{2\sigma_c^2}\right].
\]
For sufficiently separated centers, $\widetilde c$ has $m$ separated local maxima near the $t_j$. Now scale the entire curve by $0<\varepsilon<1$ and set $c(t)=\varepsilon\widetilde c(t)$. Scaling preserves the locations and number of local maxima, while
\[
\int_0^T c(u)\,\dd u=\varepsilon\int_0^T\widetilde c(u)\,\dd u
\]
can be made smaller than $S_0$ by choosing $\varepsilon$ sufficiently small. Theorem~\ref{thm:incidence-representability} then supplies a nonnegative $\beta(t)$ whose SIR incidence is exactly $c(t)$ and therefore has at least $m$ separated peaks.
\end{proof}

The number of peaks in the data is not enough to declare the SIR model impossible. Repeated waves can be reproduced if the effective transmission dial is allowed to move up and down, so the harder question is how many movements are needed and what those movements mean.

\begin{remark}[Relation to inverse-transmission literature]
Theorem~\ref{thm:incidence-representability} should be read as a transparent incidence-form foundation for the present framework, not as a claim that exact inverse transmission reconstruction is new. Related inverse constructions recover time-dependent transmission from prevalence or incidence data and examine admissible recovery rates \citep{pollicott2012,mummert2013}. Recent comparisons show that qualitative features of reconstructed transmission can also be robust across multiple compartmental structures and incidence functions \citep{wang2026inverse}. The role of Theorem~\ref{thm:incidence-representability} is to isolate a consequence needed here: multi-wave incidence alone cannot reject a simple SIR trajectory representation, so scientific discrimination must move from mere representability to complexity, identifiability, and adequacy.
\end{remark}

\section{Peak capacity and resurgence forcing}
\subsection{Prevalence peaks}
Define
\begin{equation}
    q(t)=\log \Reff(t)=\log\beta(t)+\log S(t)-\log(\gamma N),
    \label{eq:q}
\end{equation}
for $\beta(t)>0$ and $\gamma>0$.

\begin{proposition}[Prevalence peak threshold]
\label{prop:prevpeak}
At any interior time $t_\star$ with $I(t_\star)>0$, a strict local maximum of prevalence occurs when $\Reff(t)$ crosses $1$ from above to below; a strict local minimum occurs when it crosses from below to above.
\end{proposition}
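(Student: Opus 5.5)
The plan is to work directly from the growth identity \eqref{eq:Igrowth}, $\dot I(t)=\gamma I(t)\{\Reff(t)-1\}$, and read the prevalence extrema off the sign of $\dot I$. First I would make the hypothesis precise. Assume $\beta$ is piecewise continuous, so that $S$ and $I$ are absolutely continuous and $\dot I$ exists except at finitely many points. Take $\gamma>0$. By ``$\Reff$ crosses $1$ from above to below at $t_\star$'' I mean: there is $\eta>0$ with $\Reff(t)>1$ on $(t_\star-\eta,t_\star)$ and $\Reff(t)<1$ on $(t_\star,t_\star+\eta)$. A ``strict local maximum'' then means $I(t)<I(t_\star)$ for $0<|t-t_\star|<\eta$. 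I would state this interpretation in the proof, because pointwise crossing alone, with $\Reff$ merely touching $1$ or oscillating, is not enough.

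The second step is to show that $I$ stays positive near $t_\star$. Solving \eqref{eq:sirI} as a linear equation in $I$ gives
\[
I(t)=I(t_\star)\exp\!\left(\int_{t_\star}^{t}\gamma\{\Reff(u)-1\}\,\dd u\right).
\]
So $I(t_\star)>0$ implies $I(t)>0$ on the whole neighbourhood.

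The third step is the sign argument. On $(t_\star-\eta,t_\star)$ we have $\gamma>0$, $I>0$ and $\Reff-1>0$, so $\dot I>0$ almost everywhere there. Hence $I$ is strictly increasing on $(t_\star-\eta,t_\star]$. The cleanest way to see this is to use the exponential formula above: the integrand is positive on that interval, so the exponent is strictly negative for $t<t_\star$, which gives $I(t)<I(t_\star)$. On $(t_\star,t_\star+\eta)$ the integrand is negative, so the exponent is again negative and $I(t)<I(t_\star)$. Therefore $t_\star$ is a strict local maximum.

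The local-minimum case follows the same way with all inequalities reversed: both exponents become positive, so $I(t)>I(t_\star)$ on both sides. As an optional addition, when $\beta$ is continuous at $t_\star$, continuity of $\Reff$ forces $\Reff(t_\star)=1$, so $\dot I(t_\star)=0$, which matches the classical first-order condition.

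I expect the main obstacle to be the regularity and interpretation issue, not the computation. With $\beta$ allowed to jump, $\Reff$ can cross $1$ discontinuously, and $\dot I$ need not exist at $t_\star$. The exponential-integral representation handles this, since it only needs the integrand to have a fixed sign almost everywhere on each side. The remaining care is making sure that ``crossing'' is defined with one-sided sign conditions on punctured neighbourhoods, and that $\gamma>0$ is assumed; when $\gamma=0$ the factor $\gamma I$ vanishes and the statement fails.
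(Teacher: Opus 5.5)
Your proof is correct and follows the paper's route: both arguments read the extrema off the identity $\operatorname{sign}\dot I=\operatorname{sign}\{\Reff-1\}$, obtained from \eqref{eq:Igrowth} with $\gamma I>0$, and then apply the first-derivative (monotonicity) test on each side of $t_\star$. Your exponential-integral formula for $I$ and your one-sided definition of a crossing add rigor the paper's short argument leaves implicit: they guarantee $I>0$ near $t_\star$ and cover $\beta$ that jumps at $t_\star$, where $\dot I(t_\star)$ need not exist.
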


\begin{proof}
At an interior time with $I(t)>0$ and $\gamma>0$, equation \eqref{eq:Igrowth} can be divided by the positive quantity $\gamma I(t)$:
\[
\operatorname{sign}\{\dot I(t)\}=\operatorname{sign}\{\Reff(t)-1\}.
\]
Therefore $I$ is increasing when $\Reff>1$ and decreasing when $\Reff<1$. If $\Reff-1$ changes sign from positive to negative at $t_\star$, then $\dot I$ changes from positive to negative, which is the first-derivative criterion for a strict local maximum. The reverse sign change gives a strict local minimum.
\end{proof}

Differentiating \eqref{eq:q} gives a useful step-by-step decomposition. Since $\gamma$ and $N$ are constant,
\begin{align}
\dot q(t)
&=\frac{\dd}{\dd t}\log\beta(t)+\frac{\dd}{\dd t}\log S(t)\\
&=\frac{\dot\beta(t)}{\beta(t)}+\frac{\dot S(t)}{S(t)}\\
&=\frac{\dot\beta(t)}{\beta(t)}+\frac{-\beta(t)S(t)I(t)/N}{S(t)}\\
&=\frac{\dot\beta(t)}{\beta(t)}-\frac{\beta(t)I(t)}{N}.
\label{eq:qprime}
\end{align}
Thus the log-growth of $\Reff$ equals the proportional growth of transmission minus the instantaneous proportional loss of susceptibles. This identity separates the upward forcing contributed by transmission from the downward pressure created by susceptible depletion.

\begin{theorem}[No-resurgence condition and forcing requirement]
\label{thm:noresurgence}
Assume $\beta(t)>0$, $\gamma>0$, and $I(t)>0$ on an interval $J$. If
\begin{equation}
    \frac{\dot\beta(t)}{\beta(t)}\le \frac{\beta(t)I(t)}{N}
    \quad\text{for all }t\in J,
    \label{eq:noresurgence-condition}
\end{equation}
then $\Reff(t)$ is nonincreasing on $J$ and prevalence can have at most one downward threshold crossing, hence at most one local maximum on $J$. Conversely, every genuine prevalence resurgence after a previous peak requires an interval on which
\begin{equation}
    \frac{\dot\beta(t)}{\beta(t)}>\frac{\beta(t)I(t)}{N}=-\frac{\dot S(t)}{S(t)}.
    \label{eq:forcing-inequality}
\end{equation}
\end{theorem}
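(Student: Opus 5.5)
The proof rests on the identity \eqref{eq:qprime}, which decomposes $\dot q$ into a transmission term and a depletion term; the plan is to turn it into a monotonicity statement for $\Reff$ and then combine that with Proposition~\ref{prop:prevpeak}. I will assume $\beta$ is differentiable (or absolutely continuous) on $J$, so that $\dot\beta/\beta$ is meaningful and $q$ is absolutely continuous. Since $\beta>0$ and $S>0$ along solutions with $I>0$, $q=\log\Reff$ is well defined on $J$, and \eqref{eq:qprime} holds pointwise (or almost everywhere).

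For the forward direction, hypothesis \eqref{eq:noresurgence-condition} together with \eqref{eq:qprime} gives $\dot q(t)\le 0$ on $J$. Hence $q$ is nonincreasing, and since $\log$ is monotone, so is $\Reff$. Because $\Reff$ is nonincreasing, the set $\{t\in J:\Reff(t)>1\}$ is an initial subinterval of $J$ and $\{t\in J:\Reff(t)<1\}$ is a terminal subinterval. The sign of $\Reff-1$ can therefore change at most once, and only from positive to negative.

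By \eqref{eq:Igrowth}, $\operatorname{sign}\dot I=\operatorname{sign}(\Reff-1)$, so $I$ is nondecreasing and then nonincreasing on $J$, with a possible flat stretch where $\Reff\equiv1$. Thus $I$ is unimodal on $J$: it has no strict interior local minimum, and hence at most one local maximum. More precisely, any local maxima form a single connected plateau, and I will phrase ``at most one local maximum'' in the strict or plateau sense to cover this.

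For the converse I argue by contrapositive. Suppose a genuine resurgence occurs, meaning there are times $t_1<t_2<t_3$ in $J$ with $I$ having a peak at $t_1$, a strict decrease to a trough, and a strict rise afterwards. Then by \eqref{eq:Igrowth} there are points $s_1<s_2$ with $\Reff(s_1)<1<\Reff(s_2)$, so $q(s_2)-q(s_1)>0$. Writing this difference as $\int_{s_1}^{s_2}\dot q\,\dd t>0$, the integrand must be positive on a set of positive measure. If $\dot q$ is continuous, which holds when $\beta\in C^1$, that set contains an open interval, and on it \eqref{eq:qprime} yields \eqref{eq:forcing-inequality}. The equality $\beta I/N=-\dot S/S$ follows directly from \eqref{eq:sirS}.

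The main obstacle is this final step, together with the regularity and degeneracy issues. Obtaining an \emph{interval} rather than a positive-measure set needs continuity of $\dot\beta$, which I will state as an assumption; in the absolutely continuous case the conclusion weakens to a set of positive measure. I must also define ``genuine resurgence'' as a strict rise after a strict decline, to exclude plateaus where $\Reff\equiv1$. Making these definitions explicit is where I expect the care to be needed.
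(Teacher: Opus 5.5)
Your proposal is correct and follows the paper's proof essentially step for step: the identity $\dot q=\dot\beta/\beta-\beta I/N$ gives $\dot q\le0$ and hence a nonincreasing $\Reff$, Proposition~\ref{prop:prevpeak} turns this into at most one downward crossing, and for the converse an upward crossing of $\Reff=1$ forces $q$ to increase somewhere. Your extra care improves on the paper without changing the route: the paper's Step 5 only obtains a point with $\dot q>0$ although the statement claims an interval, and your $\beta\in C^1$ assumption (continuity of $\dot q$) supplies that interval, while your plateau caveat covers the case $\Reff\equiv1$ on a subinterval, which the paper's ``at most one local maximum'' passes over.
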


\begin{proof}
We prove the theorem step-by-step.
\begin{enumerate}[label={Step \arabic*.},leftmargin=*]
\item Under \eqref{eq:noresurgence-condition}, equation \eqref{eq:qprime} gives $\dot q(t)\le0$ throughout $J$.
\item Hence $q(t)=\log\Reff(t)$ is nonincreasing. Because the exponential function is strictly increasing, $\Reff(t)=e^{q(t)}$ is also nonincreasing.
\item A nonincreasing function can cross the threshold $1$ from above to below at most once and cannot subsequently cross upward from below to above.
\item By Proposition~\ref{prop:prevpeak}, a downward crossing of $\Reff=1$ creates a prevalence maximum, whereas an upward crossing is needed to restart prevalence growth after a decline. Therefore prevalence has at most one local maximum on $J$ under the stated inequality.
\item Conversely, suppose a genuine resurgence occurs after a previous peak. Then $I$ must change from decreasing to increasing, so Proposition~\ref{prop:prevpeak} requires $\Reff$ to cross $1$ upward. A continuous function cannot make such an upward crossing while remaining nonincreasing; thus $\dot q>0$ at some point in the resurgence interval (or, more generally, $q$ must increase over a subinterval).
\item Applying \eqref{eq:qprime}, $\dot q>0$ implies
\[
\frac{\dot\beta}{\beta}>\frac{\beta I}{N}.
\]
Finally, \eqref{eq:sirS} gives $-\dot S/S=\beta I/N$, yielding \eqref{eq:forcing-inequality}.
\end{enumerate}
\end{proof}

\begin{corollary}[Peak-count crossing requirement]
If prevalence starts in a growth phase and exhibits $m\ge1$ separated strict local maxima followed by decline, then $\Reff-1$ must undergo at least $2m-1$ alternating sign changes: $m$ downward crossings and at least $m-1$ upward crossings. Consequently, at least $m-1$ resurgence intervals must contain points satisfying \eqref{eq:forcing-inequality}.
\end{corollary}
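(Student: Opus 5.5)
The plan is to reduce the statement to sign bookkeeping for $\Reff-1$, using the identity $\operatorname{sign}\{\dot I\}=\operatorname{sign}\{\Reff-1\}$ from Proposition~\ref{prop:prevpeak}. Then I will invoke the converse half of Theorem~\ref{thm:noresurgence} once for each upward crossing.

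Throughout I assume $\beta$ is continuous and positive, and $I>0$ on the interval, so that $\Reff$ is continuous. I label the strict local maxima $t_1<\cdots<t_m$.

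\textbf{Step 1 (sign structure).}
\begin{itemize}
\item "Starts in a growth phase" means $\Reff>1$ on an initial subinterval before $t_1$.
\item "Followed by decline" means $\Reff<1$ on a subinterval after $t_m$.
\item Since $\dot I$ and $\Reff-1$ share a sign, each strict maximum $t_j$ forces $\Reff-1$ to be nonnegative on a left neighbourhood of $t_j$ and nonpositive on a right neighbourhood. It must also be strictly positive somewhere in every left neighbourhood and strictly negative somewhere in every right neighbourhood, otherwise $I$ would be constant nearby and $t_j$ would not be strict.
\end{itemize}
This gives, for each $j$, points $a_j<t_j<b_j$ arbitrarily close to $t_j$ with $\Reff(a_j)>1>\Reff(b_j)$.

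\textbf{Step 2 (interlacing).}
\begin{itemize}
\item Between consecutive maxima $t_j<t_{j+1}$ I choose $b_j<a_{j+1}$. This is possible because the maxima are separated, so the neighbourhoods can be taken disjoint.
\item Then $\Reff(b_j)<1<\Reff(a_{j+1})$.
\item By the intermediate value theorem, $\Reff-1$ has an upward sign change in $(b_j,a_{j+1})$, which gives $m-1$ upward crossings.
\item Likewise each pair $(a_j,b_j)$ contains a downward sign change, which gives $m$ downward crossings.
\end{itemize}
Ordering the points as $a_1<b_1<a_2<\cdots<b_m$ shows these $2m-1$ sign changes alternate and are distinct. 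Here I interpret a "sign change" as a transition between a point where $\Reff>1$ and a later point where $\Reff<1$, or the reverse. That interpretation avoids any fuss about tangencies.

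\textbf{Step 3 (forcing).}
\begin{itemize}
\item On each interval $[b_j,a_{j+1}]$ we have $q(b_j)<0<q(a_{j+1})$, so $q$ is not nonincreasing there.
\item Assuming $\beta$ is differentiable (as needed for \eqref{eq:qprime}), the mean value theorem gives $\tau\in(b_j,a_{j+1})$ with $\dot q(\tau)>0$.
\item By \eqref{eq:qprime} and \eqref{eq:sirS}, $\tau$ satisfies \eqref{eq:forcing-inequality}.
\item The intervals $(b_j,a_{j+1})$ are disjoint, so this yields at least $m-1$ distinct resurgence intervals, each containing such a point.
\end{itemize}
This is exactly the converse part of Theorem~\ref{thm:noresurgence}, applied interval by interval.

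\textbf{Main obstacle.} The delicate step is Step 1: extracting genuine strict sign information for $\Reff-1$ from strict maxima of $I$ when $\dot I$ might vanish on sets accumulating at $t_j$. The point to establish is that strictness alone still guarantees points with $\dot I>0$ arbitrarily close to the left of $t_j$ and points with $\dot I<0$ arbitrarily close to the right. If $\dot I\le 0$ held on a whole left neighbourhood, then $I$ would be nonincreasing there, contradicting strictness. I will prove this by that direct contradiction, and the rest is routine.
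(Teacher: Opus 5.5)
Your proposal is correct and follows the paper's route. You do the sign bookkeeping via $\operatorname{sign}\dot I=\operatorname{sign}(\Reff-1)$, then apply the converse half of Theorem~\ref{thm:noresurgence} on each of the $m-1$ disjoint intervals $(b_j,a_{j+1})$. You are also more careful than the paper: it cites Proposition~\ref{prop:prevpeak}, which is a sufficiency statement, for the necessity direction, whereas your interlaced points $a_1<b_1<a_2<\cdots<b_m$ and the mean value theorem actually establish that direction.

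One correction is needed. Drop the bullet claim that a strict maximum forces $\Reff-1\ge0$ on a whole left neighbourhood and $\Reff-1\le0$ on a whole right one, because it is false: a prevalence profile like $I(t_j)-(t-t_j)^4[2+\sin\{1/(t-t_j)\}]$, realizable with positive $\beta$, has $\dot I$ changing sign infinitely often on both sides of $t_j$. Drop also the ``otherwise $I$ would be constant'' justification, which depends on that claim. Your final-paragraph contradiction argument already produces the points $a_j,b_j$ on its own.
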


\begin{proof}
Starting in a growth phase means $\Reff-1>0$. Each strict prevalence maximum requires a positive-to-negative sign change by Proposition~\ref{prop:prevpeak}; these contribute $m$ downward crossings. Between any two separated maxima, prevalence must leave the declining phase and grow again, requiring a negative-to-positive sign change; there are at least $m-1$ such intervals. Therefore there are at least $m+(m-1)=2m-1$ alternating sign changes. Theorem~\ref{thm:noresurgence} then implies that each upward-resurgence interval must contain a point where \eqref{eq:forcing-inequality} holds.
\end{proof}

This result gives a mechanistically interpretable ``forcing budget'': repeated prevalence waves require repeated increases in effective transmission strong enough to offset the continual loss of susceptibles. Such increases may be caused by true contact-rate changes, variants, seasonal forcing, importation, omitted heterogeneity, or other mechanisms absorbed by $\beta(t)$.

After many people have already been infected, another wave needs an additional push. That push might be real behavior change, a new variant, imported infections, weather, school calendars, or simply hidden structure that the simple model does not show separately.

\subsection{Incidence peaks}
Reported epidemic waves are more commonly defined using incidence. The incidence-growth identity follows directly from logarithmic differentiation. Whenever $c(t)>0$,
\begin{align}
\log c(t)&=\log\beta(t)+\log S(t)+\log I(t)-\log N,\\
\frac{\dot c(t)}{c(t)}
&=\frac{\dot\beta(t)}{\beta(t)}+\frac{\dot S(t)}{S(t)}+\frac{\dot I(t)}{I(t)}\\
&=\frac{\dot\beta(t)}{\beta(t)}-\frac{\beta(t)I(t)}{N}
+\left\{\frac{\beta(t)S(t)}{N}-\gamma\right\}\\
&=\frac{\dot\beta(t)}{\beta(t)}+\frac{\beta(t)}{N}\{S(t)-I(t)\}-\gamma.
\label{eq:incidence_growth}
\end{align}

\begin{proposition}[Incidence stationary-point and peak condition]
\label{prop:incpeak}
Assume $\beta$ is differentiable and positive near an interior time $t_\star$, and $c(t_\star)>0$. If $t_\star$ is a stationary point of incidence, then
\begin{equation}
    \frac{\dot\beta(t_\star)}{\beta(t_\star)}
    +\frac{\beta(t_\star)}{N}\{S(t_\star)-I(t_\star)\}
    -\gamma=0.
    \label{eq:inc_peak_eq}
\end{equation}
If the left-hand side crosses from positive to negative at $t_\star$, then $t_\star$ is a strict local maximum of incidence; a negative-to-positive crossing gives a strict local minimum.
\end{proposition}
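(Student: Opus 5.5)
The argument mirrors the proof of Proposition~\ref{prop:prevpeak}, with the incidence-growth identity \eqref{eq:incidence_growth} playing the role that \eqref{eq:Igrowth} played for prevalence. Write $G(t)$ for the left-hand side of \eqref{eq:inc_peak_eq},
\[
G(t)=\frac{\dot\beta(t)}{\beta(t)}+\frac{\beta(t)}{N}\{S(t)-I(t)\}-\gamma .
\]
The aim is to show that near $t_\star$ we have $\dot c(t)=c(t)\,G(t)$ with $c(t)>0$. Then $\dot c$ and $G$ have the same sign, and both claims follow directly.

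\textbf{Step 1: validity near $t_\star$.} Since $\beta$ is differentiable and positive near $t_\star$ and $(S,I)$ solve \eqref{eq:sirS}--\eqref{eq:sirI}, the states $S$ and $I$ are $C^1$. Hence $c=\beta SI/N$ is differentiable near $t_\star$. Because $c(t_\star)>0$ and $c$ is continuous, $c$ stays positive on a neighbourhood $U$ of $t_\star$. On $U$ we also have $S>0$ and $I>0$, so $\log c$, $\log S$ and $\log I$ are all defined.

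\textbf{Step 2: the identity.} The logarithmic differentiation leading to \eqref{eq:incidence_growth} therefore holds on $U$, giving $\dot c=c\,G$ there. If $t_\star$ is a stationary point, then $\dot c(t_\star)=0$. Dividing by $c(t_\star)>0$ gives $G(t_\star)=0$, which is \eqref{eq:inc_peak_eq}.

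\textbf{Step 3: first-derivative test.} Suppose $G$ changes sign from positive to negative at $t_\star$. Then, on a punctured neighbourhood, $\dot c>0$ to the left of $t_\star$ and $\dot c<0$ to the right. By the mean value theorem, $c$ is strictly increasing on $(t_\star-\delta,t_\star]$ and strictly decreasing on $[t_\star,t_\star+\delta)$. So $t_\star$ is a strict local maximum. The reverse sign change gives a strict local minimum in the same way.

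\textbf{Main obstacle.} The only delicate point is regularity. The term $\dot\beta/\beta$ need not be continuous, so ``crosses'' must be read as a sign change of $G$ on one-sided punctured neighbourhoods, not as continuity of $G$. The mean-value argument needs only differentiability of $c$, which Step 1 supplies, so no continuity of $\dot\beta$ is needed. I would state this reading of ``crossing'' explicitly.
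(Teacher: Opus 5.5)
Your proposal is correct and follows the paper's own proof: both use the identity $\dot c=c\,G$ from the logarithmic differentiation \eqref{eq:incidence_growth}, divide by $c(t_\star)>0$ at a stationary point to get \eqref{eq:inc_peak_eq}, and then apply the first-derivative test to a sign change of $G$. Your extra care about positivity of $c$, $S$, $I$ on a neighbourhood, and about reading ``crosses'' as a one-sided sign change, matches the paper's ``positive immediately before and negative immediately after'' formulation.
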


\begin{proof}
Define the incidence growth function
\[
g_c(t)=\frac{\dot\beta(t)}{\beta(t)}+\frac{\beta(t)}{N}\{S(t)-I(t)\}-\gamma.
\]
Equation \eqref{eq:incidence_growth} gives $\dot c(t)=c(t)g_c(t)$. We separate the stationary-point identity from the sufficient sign-change test.
\begin{enumerate}[label={Step \arabic*.},leftmargin=*]
\item At an interior stationary point, $\dot c(t_\star)=0$. Since $c(t_\star)>0$, division by $c(t_\star)$ gives $g_c(t_\star)=0$, which is exactly \eqref{eq:inc_peak_eq}.
\item If $g_c$ is positive immediately before $t_\star$ and negative immediately after, then $c>0$ implies $\dot c>0$ before and $\dot c<0$ after. Hence incidence rises into $t_\star$ and falls after it, so $t_\star$ is a strict local maximum.
\item The same argument with the signs reversed gives a strict local minimum.
\end{enumerate}
\end{proof}

Equation \eqref{eq:inc_peak_eq} explains why an incidence peak need not coincide with $\Reff=1$: the derivative of the transmission forcing contributes directly to incidence growth. Rapidly decreasing contact or reporting-adjusted effective transmission can generate an incidence peak while prevalence is still increasing.

The day with the most new cases is not always the same as the day transmission falls below the epidemic threshold. New cases can start falling because the transmission dial is dropping quickly, even while many people are still infectious.

\section{From exact representation to minimal dynamic complexity}
The exact oracle $\beta_{c,\gamma}$ is useful as a mathematical benchmark but is generally too rough and too data-responsive to be interpreted scientifically. We therefore restrict transmission to a nested positive family that exactly matches the parameterization used by the companion Shiny app.

\subsection{Piecewise-linear log-transmission parameterization}
For a chosen complexity $K\ge2$, place equally spaced knots
\begin{equation}
\tau_j=t_0+\frac{j-1}{K-1}(T-t_0),\qquad j=1,\ldots,K,
\label{eq:knot-times}
\end{equation}
and let $\theta_j=\log\beta(\tau_j)$. For $t\in[\tau_j,\tau_{j+1}]$, define
\begin{equation}
w_j(t)=\frac{t-\tau_j}{\tau_{j+1}-\tau_j}\in[0,1].
\label{eq:interp-weight}
\end{equation}
Linear interpolation on the log scale gives
\begin{align}
\ell_K(t)&=(1-w_j(t))\theta_j+w_j(t)\theta_{j+1},\\
\beta_K(t)&=\exp\{\ell_K(t)\}.
\label{eq:loglinear-beta}
\end{align}
Thus $\beta_K(t)>0$ automatically, $\beta_K(\tau_j)=e^{\theta_j}$, and increasing $K$ increases only the temporal resolution of the forcing while leaving the state space $S$--$I$--$R$ unchanged. The same representation can be written as a linear ``hat''-basis expansion on the log scale.

The fitted transmission curve is drawn by choosing several bend-points, called knots, and connecting them smoothly on the log scale. A small $K$ gives a simple curve with few bends; a large $K$ lets the curve wiggle more.

The exploratory Shiny app uses a first-order random-walk regularization on the knot values,
\begin{equation}
\theta_{j+1}-\theta_j\sim N(0,\sigma_{\theta}^2),
\label{eq:rw1prior}
\end{equation}
subject to user-defined lower and upper bounds on $\beta$. This discourages unnecessary knot-to-knot oscillation. More formal analyses may replace this with a second-order random walk, B-splines, Gaussian processes, or Brownian-motion priors; all are established approaches for time-varying epidemic forcing \citep{bouman2024}. MPPC is therefore a principle for nested dynamic flexibility, not a claim that the log-linear basis is uniquely preferred.

The random-walk prior is a guardrail: nearby knots are encouraged to be similar unless the data strongly suggest otherwise. This prevents the fitted curve from chasing every small reporting bump.

\subsection{Peak geometry}
For an observed incidence series $y=(y_1,\ldots,y_T)$, let a deterministic peak detector return $m(y)$ prominent peaks after a specified smoothing and minimum-separation rule. For the $j$th ordered peak, define time $t_j(y)$ and height $h_j(y)$. When at least two peaks are present, define the inter-peak trough height $v_j(y)$ as the minimum smoothed incidence between peaks $j$ and $j+1$. Let $Q_p(y)$ denote the time by which fraction $p$ of cumulative incidence has occurred.

We use six complementary discrepancies:
\begin{align}
d_{\mathrm{curve}}(y,z)
&=\left[\frac1T\sum_{t=1}^T\{\ell(y_t)-\ell(z_t)\}^2\right]^{1/2},\\
d_{\mathrm{count}}(y,z)&=|m(y)-m(z)|,\\
d_{\mathrm{time}}(y,z)&=\frac1{m_\wedge T}\sum_{j=1}^{m_\wedge}|t_j(y)-t_j(z)|,\\
d_{\mathrm{height}}(y,z)&=\frac1{m_\wedge}\sum_{j=1}^{m_\wedge}\left|\log\frac{1+h_j(y)}{1+h_j(z)}\right|,\\
d_{\mathrm{trough}}(y,z)&=\frac1{v_\wedge}\sum_{j=1}^{v_\wedge}\left|\log\frac{1+v_j(y)}{1+v_j(z)}\right|,\\
d_{\mathrm{cum}}(y,z)&=\frac1{|\mathcal P|T}\sum_{p\in\mathcal P}|Q_p(y)-Q_p(z)|,
\end{align}
where $\ell(x)=\log(1+x)$, $m_\wedge=\max\{1,\min(m(y),m(z))\}$, $v_\wedge=\max\{1,\min(m(y)-1,m(z)-1)\}$ with the trough term set to zero when both curves have fewer than two peaks, and $\mathcal P=\{0.1,0.25,0.5,0.75,0.9\}$. We set $\mathcal P=\{0.10,0.25,0.50,0.75,0.90\}$ to compare the epidemic timing at early $(10\%)$, quarter $(25\%)$, median $(50\%)$, late $(75\%)$, and near-tail $(90\%)$ cumulative incidence milestones. This discourages fits that reproduce peak heights but shift the overall epidemic mass too far forward or backward in time.

The peak-aware ABC distance is
\begin{equation}
D(y,z)=w_c\widetilde d_{\mathrm{curve}}+w_n d_{\mathrm{count}}
+w_t d_{\mathrm{time}}+w_h d_{\mathrm{height}}+w_v d_{\mathrm{trough}}+w_q d_{\mathrm{cum}},
\label{eq:peakdistance}
\end{equation}
with scale-normalized components (e.g., $\widetilde d_{\mathrm{curve}}$). Increasing the geometry weights explicitly prioritizes preservation of waves and their intervening troughs over average trajectory error. Unless stated otherwise, the \textit{in silico} experiments used the weights $w_c=1$, $w_n=4$, $w_t=4$, $w_h=1.4$, $w_v=1$, and $w_q=0.5$, corresponding to the expression $D=d_{\mathrm{curve}}+4(d_{\mathrm{count}}+d_{\mathrm{time}}+0.35d_{\mathrm{height}}+0.25d_{\mathrm{trough}})+0.5d_{\mathrm{cum}}$. In the Shiny app, these are exposed as a whole-curve weight, a common peak-emphasis weight for count/timing/height/trough geometry, and a separate trough multiplier.

The distance score does not only ask whether the average error is small. It also checks whether the model finds the same number of waves, puts them in roughly the same places, matches their sizes, and keeps the valleys between waves.

\subsection{Minimal peak-preserving complexity}
\begin{definition}[Minimal peak-preserving complexity]
\label{def:mppc}
Let $\mathcal M_K$ denote the SIR model with a $K$-dimensional transmission basis. Fix adequacy tolerances $\delta_c,\delta_t,\delta_h,\delta_v$, a peak-count rule, and posterior probability threshold $1-\alpha_{\mathrm{adeq}}$. Define
\begin{align}
A_K=\Prb\big(&d_{\mathrm{curve}}(Y^{\mathrm{rep}},y)\le\delta_c,\;
 m(Y^{\mathrm{rep}})=m(y),\\
&d_{\mathrm{time}}(Y^{\mathrm{rep}},y)\le\delta_t,\;
 d_{\mathrm{height}}(Y^{\mathrm{rep}},y)\le\delta_h,\;
 d_{\mathrm{trough}}(Y^{\mathrm{rep}},y)\le\delta_v
\mid y,\mathcal M_K\big).
\end{align}
The {minimal peak-preserving complexity} is
\begin{equation}
    K^\star=\min\{K:A_K\ge1-\alpha_{\mathrm{adeq}}\}.
    \label{eq:mppc}
\end{equation}
\end{definition}

Under a fixed observation
interval, smoothing rule, knot basis, and peak-preserving adequacy criterion,
$K^\star$ denotes the smallest tested knot dimension whose forward-SIR
trajectory has the required detected peak count and satisfies the stated
curve, peak-time, and peak-height tolerances. MPPC asks for the simplest changing-transmission curve that is still good enough. It is like asking how few control points are needed to redraw the epidemic waves without losing the important peaks and troughs.

$K^\star$ is not a universal property of a pathogen. It is a design-conditional, resolution-dependent operational minimum determined by the observed data, observation window and resolution, smoothing and peak-detection rules, transmission basis, prior specification, and adequacy tolerances. Its value is nevertheless useful: it quantifies how much dynamic flexibility a simple epidemic model needs before additional flexibility ceases to be necessary for reproducing the observed waves.

\section{Peak-aware approximate Bayesian inference}
Suppose observed cases arise from latent daily incidence $C_t(\vartheta)$ generated by the time-varying SIR simulator. A general observation simulator may include reporting fraction $\rho$ and overdispersion $\phi$, for example
\begin{equation}
Y_t\sim\operatorname{NegBin}\{\rho C_t(\vartheta),\phi\},
\end{equation}
but ABC requires only simulation, not evaluation of this likelihood. The parameter vector may include log-transmission knot values, $\gamma$, $\rho$, dispersion, initial infectious prevalence, or selected nuisance parameters. This likelihood-free perspective follows the standard ABC construction \citep{beaumont2002,sunnaker2013}, while sequential particle schemes motivate the formal ABC-SMC version in Algorithm~\ref{alg:abcsmc} \citep{sisson2007,sisson2009correction,toni2009,delmoral2012}. The original ABC--PRC weighting scheme of \citet{sisson2007} was subsequently corrected because it could yield biased posterior samples; we therefore cite the correction explicitly and rely on the later weighted sequential formulations for the formal algorithm \citep{sisson2009correction,toni2009,delmoral2012}.

\begin{algorithm}[htbp]
\label{alg:abcsmc}
\algorithmtitle{Peak-aware ABC-SMC for a fixed transmission complexity $K$}

\begin{algorithmic}[1]

\Require Observed incidence $y$, model $\mathcal M_K$, prior
$\pi_K(\vartheta)$, distance $D$, particle count $M$, tolerances
$\epsilon_1>\cdots>\epsilon_L$.

\For{$\ell=1,\ldots,L$}
    \For{$i=1,\ldots,M$}
        \Repeat
            \If{$\ell=1$}
                \State Draw $\vartheta^\star\sim\pi_K$.
            \Else
                \State Resample an earlier particle and perturb using
                kernel $K_\ell$.
            \EndIf
            \State Simulate
            $Y^\star\sim p(\cdot\mid\vartheta^\star,\mathcal M_K)$.
        \Until{$D(Y^\star,y)\le\epsilon_\ell$}
        \State Store $\vartheta^\star$ and the standard ABC-SMC
        importance weight.
    \EndFor
\EndFor

\Return weighted particles and posterior predictive replicates.

\end{algorithmic}
\end{algorithm}

For interactive use, the companion Shiny app uses a computationally lighter adaptive population approximation: it mixes broad prior draws with proposals centered on a smoothed incidence-oracle projection, retains the lowest-distance fraction, and then iteratively resamples and perturbs retained particles while tightening the empirical tolerance. The oracle-centered component is a proposal accelerator designed to make multi-peak recovery practical in an interactive session; it is not presented as an exact posterior algorithm. The app tries many candidate transmission curves, keeps the curves that best reproduce the observed waves, and summarizes the surviving curves. 

\section{Dynamic roughness and abruptness}

The number of epidemic peaks alone does not determine the minimal
peak-preserving complexity. To formalize temporal features of an effective
transmission curve, let
\[
\theta_t=\log\{\beta(t)\}, \qquad t=0,\ldots,T.
\]
We define the discrete abruptness index
\begin{equation}
F_{1,\beta}
=
\max_{t=0,\ldots,T-1}
\frac{\left|\theta_{t+1}-\theta_t\right|}{\Delta t},
\label{eq:abruptness-index}
\end{equation}
which measures the largest day-to-day log-transmission change, and the
discrete roughness index
\begin{equation}
F_{2,\beta}
=
\sum_{t=1}^{T-1}
\left|
\theta_{t+1}-2\theta_t+\theta_{t-1}
\right|,
\label{eq:roughness-index}
\end{equation}
which measures accumulated bending of the log-transmission trajectory. Larger
values of $F_{1,\beta}$ indicate a more abrupt local change, whereas larger
values of $F_{2,\beta}$ indicate a more irregular transmission history.

Table~\ref{tab:roughness-kstar} is a deliberately deterministic
fixed-setting snapshot experiment. It is not a general law relating
roughness to minimal peak-preserving complexity. All inputs were held fixed
except the width of one transmission pulse. Thus, every simulated trajectory
has one detected broad incidence peak, and the experiment compares the
temporal resolution needed to reproduce differently shaped versions of the
same one-wave epidemic.

The simulation used 241 daily observations,
\[
t=0,1,\ldots,240,\qquad \Delta t=1\ {\rm day},
\]
with
\[
N=1{,}000{,}000,\qquad
S_0=995{,}000,\qquad
I_0=800,\qquad
R_0=4{,}200,\qquad
\gamma=\frac{1}{10}\ {\rm day}^{-1}.
\]
The effective transmission curve was a single Gaussian pulse,
\begin{equation}
\beta(t;\omega)
=
0.005+
0.35
\exp\left[
-\frac{(t-90)^2}{2\omega^2}
\right],
\label{eq:roughness-pulse-beta}
\end{equation}
where the pulse width was varied over
\[
\omega\in\{8,12,18,28,42\}\ {\rm days}.
\]
Smaller values of $\omega$ give a sharper and more abrupt transmission increase
and decline, while preserving a single pulse centre at day 90.

The corresponding deterministic daily SIR update was
\begin{align}
c_t&=\beta(t;\omega)\frac{S_tI_t}{N},\\
S_{t+1}&=S_t-c_t,\\
I_{t+1}&=I_t+c_t-\gamma I_t,\\
R_{t+1}&=R_t+\gamma I_t.
\end{align}
Here, $c_t$ is the simulated daily incidence target.

For each width, the target incidence was generated from
Eq.~\eqref{eq:roughness-pulse-beta}. A candidate compact transmission curve
was formed by evaluating $\theta_t=\log\{\beta(t;\omega)\}$ at $K$ equally spaced
knot times, linearly interpolating on the $\log\text{-}\beta$ scale, exponentiating, and
propagating the resulting curve through the same forward-SIR update.
Candidate dimensions were
\[
K=3,\ldots,81.
\]
Peak detection used a boundary-normalized 7-day moving average, a minimum
peak separation of 25 days, and a prominence threshold equal to 10\% of
maximum smoothed incidence. Each target trajectory had one detected incidence
peak under this rule.

The log-scale normalized root-mean-square error was calculated as
\begin{equation}
\mathrm{log\mbox{-}NRMSE}
=
\frac{
\sqrt{
T^{-1}\sum_{t=0}^{T-1}
\left[
\log(1+\widehat c_t)-\log(1+c_t)
\right]^2
}
}{
\max_t\log(1+c_t)-\min_t\log(1+c_t)
}.
\label{eq:log-nrmse}
\end{equation}
Adequacy required one detected peak, log-NRMSE no greater than 0.20,
peak-time mean absolute error no greater than 7 days, and peak-height mean
absolute relative error no greater than 0.35.

Because the knot locations are fixed and equally spaced in this experiment,
$K_{\mathrm{grid}}$ denotes the first adequate dimension on this particular
grid. It is a grid-dependent deterministic proxy, rather than a replacement
for the formal posterior-predictive MPPC $K^\star$ in
Definition~\ref{def:mppc}.

\begin{table}[H]
\centering
\caption{Fixed-setting one-wave pulse-width snapshot. All simulation inputs
are fixed except the transmission-pulse width. Every target has one detected
incidence peak. $K_{\mathrm{grid}}$ is the first adequate number of equally
spaced $\log\text{-}\beta$ knots under the stated forward-SIR, peak-detection, and
adequacy rules.}
\label{tab:roughness-kstar}
\small
\begin{tabular}{@{}rrrrrrr@{}}
\toprule
Pulse width &
$F_{1,\beta}$ &
$F_{2,\beta}$ &
Detected incidence peaks &
$K_{\mathrm{grid}}$ &
log-NRMSE &
peak-time MAE\\
\midrule
8  & 0.2393 & 0.9572 & 1 & 31 & 0.042 & 0.00\\
12 & 0.1596 & 0.6383 & 1 & 27 & 0.037 & 1.00\\
18 & 0.1065 & 0.4257 & 1 & 21 & 0.029 & 1.00\\
28 & 0.0685 & 0.2399 & 1 & 10 & 0.053 & 4.00\\
42 & 0.0456 & 0.1265 & 1 & 7  & 0.045 & 4.00\\
\bottomrule
\end{tabular}
\end{table}

Within this fixed one-wave configuration, narrower pulses have larger
$F_{1,\beta}$ and $F_{2,\beta}$ and require a larger adequate fixed-grid knot
dimension. In this sense, sharper effective-transmission changes require more
temporal resolution even when the observed epidemic has only one broad peak.
The result is not driven by a change in the number of detected waves.

Nevertheless, this should be interpreted as a reproducible snapshot of one
specified design, not as evidence that abrupt dynamics always imply a larger
formal $K^\star$. The reported $K_{\mathrm{grid}}$ depends on the equally
spaced knot grid, the pulse location, the initial SIR state, the observation
resolution, smoothing, and the selected adequacy thresholds. Under fixed
observation resolution, smoothing, basis, adequacy rules, and comparable
epidemic geometry, greater dynamic roughness or more rapid
effective-transmission changes will {generally tend} to increase the
temporal resolution needed to preserve the incidence trajectory.

More generally, the required complexity depends on $F_{1,\beta}$,
$F_{2,\beta}$, the number and timing of transitions, wave width, peak
separation, peak height and trough depth, $\Delta t$, smoothing, reporting
noise, knot placement, basis, and adequacy thresholds. A large but gradual
change in $\beta(t)$ may require few knots. Conversely, a small but rapid
change near an epidemic turning point can require several knots. Incidence is
also a dynamically filtered observation of transmission: abrupt changes in
$\beta(t)$ may be muted or delayed by the SIR state variables. Thus, formal
$K^\star$ should be interpreted as a resolution-dependent measure of
{resolvable dynamic roughness}, rather than as a direct count of epidemic
waves or real transmission drivers.

\section{Incidence-only non-identifiability of transmission and removal}
Exact representability immediately exposes a structural inferential limitation.

\begin{theorem}[Incidence-only $\beta$--$\gamma$ non-identifiability]
\label{thm:gamma-nonident}
Suppose the assumptions of Theorem~\ref{thm:incidence-representability} hold and only the incidence trajectory $c(t)$ is observed. For every $\gamma\ge0$, equations \eqref{eq:Ic}--\eqref{eq:betaoracle} define a generally distinct pair $(\beta_{c,\gamma}(t),\gamma)$ producing exactly the same incidence $c(t)$. Therefore $\gamma$ and an unrestricted $\beta(t)$ are not structurally identifiable from incidence alone.
\end{theorem}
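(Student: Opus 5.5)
The plan is to reduce everything to Theorem~\ref{thm:incidence-representability}, which already does the analytic work. The hypotheses of that theorem do not involve $\gamma$ beyond $\gamma\ge0$, and its conclusion holds for any such $\gamma$.

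\textbf{Step 1 (exact incidence for every $\gamma$).} Fix an arbitrary $\gamma\ge0$ and apply Theorem~\ref{thm:incidence-representability} with the same $c$, $N$, $S_0$, $I_0$. This gives $S_c$, $I_{c,\gamma}$ and $\beta_{c,\gamma}\ge0$. The SIR system with parameters $(\beta_{c,\gamma},\gamma)$ and initial state $(S_0,I_0,N-S_0-I_0)$ has incidence exactly $c(t)$ on $[0,T]$. So the whole one-parameter family $\{(\beta_{c,\gamma},\gamma):\gamma\ge0\}$ is observationally equivalent when only $c$ is observed. Note that $S_c$ does not depend on $\gamma$, so only $I$ and $\beta$ change across the family.

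\textbf{Step 2 (the pairs are genuinely different).} Since the $\gamma$ components differ, distinct $\gamma$ already give distinct parameter pairs. This is enough for non-identifiability of $\gamma$.

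To justify ``generally distinct'' for $\beta$ as well, compare $I_{c,\gamma}$ across $\gamma$. Differentiating \eqref{eq:Ic} in $\gamma$ gives
\[
\partial_\gamma I_{c,\gamma}(t)=-tI_{c,\gamma}(t)+e^{-\gamma t}\int_0^t u\,e^{\gamma u}c(u)\,\dd u
=-e^{-\gamma t}\Big[tI_0+\int_0^t (t-u)e^{\gamma u}c(u)\,\dd u\Big],
\]
which is strictly negative for every $t>0$ because $I_0>0$. Hence $I_{c,\gamma}(t)$ is strictly decreasing in $\gamma$ for each $t>0$. Then $\beta_{c,\gamma}(t)=Nc(t)/(S_c(t)I_{c,\gamma}(t))$ is strictly increasing in $\gamma$ wherever $c(t)>0$. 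So if $c\not\equiv0$, the curves $\beta_{c,\gamma}$ differ for different $\gamma$ on a set of positive measure, since $c$ is continuous. Only the degenerate case $c\equiv0$ gives $\beta\equiv0$ for all $\gamma$, which explains the word ``generally''.

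\textbf{Step 3 (conclusion).} Structural identifiability requires that equal outputs imply equal parameters. The map $(\beta,\gamma)\mapsto c$ is many-to-one on the unrestricted class of nonnegative continuous-in-data $\beta$, so neither $\gamma$ nor $\beta(t)$ is identifiable from incidence alone.

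The only step that needs care is the sign computation for $\partial_\gamma I_{c,\gamma}$ in Step 2. I would verify it by differentiating under the integral, which is legitimate because $c$ is continuous on the compact interval $[0,T]$. Alternatively, for $\gamma_1<\gamma_2$ one can compare $e^{-\gamma_1(t-u)}>e^{-\gamma_2(t-u)}$ directly inside the integral form $I_{c,\gamma}(t)=e^{-\gamma t}I_0+\int_0^t e^{-\gamma(t-u)}c(u)\,\dd u$. The rest is bookkeeping inherited from Theorem~\ref{thm:incidence-representability}.
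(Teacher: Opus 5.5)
Your proof is correct. Its skeleton matches the paper's: apply Theorem~\ref{thm:incidence-representability} for each $\gamma$, note that distinct $\gamma$ already give distinct pairs, then show the $\beta$-components differ unless $c\equiv0$.

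The real difference is in how you show $\beta_{c,\gamma_1}\ne\beta_{c,\gamma_2}$. The paper argues by contradiction from the balance law. If $I_1\equiv I_2=I>0$ on an interval, then $c-\gamma_1 I=c-\gamma_2 I$ forces $(\gamma_2-\gamma_1)I=0$. Applying this on an open interval where $c>0$ gives \emph{some} time at which the reconstructed transmission rates differ.

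You instead use the explicit solution to get $\partial_\gamma I_{c,\gamma}(t)<0$ for $t>0$. Your formula is right. The direct comparison $e^{-\gamma_1(t-u)}\ge e^{-\gamma_2(t-u)}$, together with the strict inequality $e^{-\gamma_1 t}I_0>e^{-\gamma_2 t}I_0$, is the cleaner way to say it and avoids differentiating under the integral.

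Your route proves strictly more. It gives a pointwise ordering $I_{c,\gamma_1}(t)>I_{c,\gamma_2}(t)$ for every $t>0$ when $\gamma_1<\gamma_2$. Hence $\beta_{c,\gamma_1}(t)<\beta_{c,\gamma_2}(t)$ at every $t>0$ with $c(t)>0$. This has a definite direction: a longer infectious period means larger prevalence and smaller required transmission, consistent with the monotone pattern reported in Experiment 5. The paper's argument concludes less, since it does not exclude isolated coincidences of $I_1$ and $I_2$. However, it uses only the local equation $\dot I=c-\gamma I$, not the solution formula.

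Two small corrections:
\begin{itemize}
\item At $t=0$, $I_{c,\gamma}(0)=I_0$ for all $\gamma$, so $\beta_{c,\gamma}(0)$ does not depend on $\gamma$. Your claim of strict increase ``wherever $c(t)>0$'' should read ``wherever $t>0$ and $c(t)>0$''. This does not affect your positive-measure conclusion.
\item In the degenerate case $c\equiv0$, the $\beta$-component is in fact identified, since it must be $0$. So the final claim that ``neither $\gamma$ nor $\beta(t)$ is identifiable'' should be restricted to $c\not\equiv0$ for the $\beta$ part, as your own earlier remark already suggests.
\end{itemize}
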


\begin{proof}
Fix the observed feasible incidence $c(t)$ and consider any two distinct removal rates $\gamma_1\ne\gamma_2$.
\begin{enumerate}[label={Step \arabic*.},leftmargin=*]
\item By Theorem~\ref{thm:incidence-representability}, each $\gamma_r$ ($r=1,2$) generates
\[
I_r(t)=e^{-\gamma_r t}\left[I_0+\int_0^t e^{\gamma_r u}c(u)\,\dd u\right],
\qquad
\beta_r(t)=\frac{Nc(t)}{S_c(t)I_r(t)}.
\]
Both pairs reproduce exactly the same $c(t)$.

\item The infectious trajectories cannot be identical on an interval. If $I_1(t)=I_2(t)=I(t)>0$, then both must satisfy
\[
\dot I=c-\gamma_1 I=c-\gamma_2 I.
\]
Subtracting gives $(\gamma_2-\gamma_1)I=0$, contradicting $\gamma_1\ne\gamma_2$ and $I>0$.

\item Because $S_c(t)$ is determined by incidence alone and is independent of $\gamma$, consider first the nontrivial case $c\not\equiv0$. Continuity then gives a nonempty open interval on which $c(t)>0$. The trajectories $I_1$ and $I_2$ cannot coincide throughout that interval by Step 2, so at some time in it
\[
\beta_1(t)=\frac{Nc(t)}{S_c(t)I_1(t)}\ne
\frac{Nc(t)}{S_c(t)I_2(t)}=\beta_2(t).
\]
If $c\equiv0$, the reconstructed transmission is $\beta\equiv0$ for every $\gamma$, but the parameter pairs remain distinct because $\gamma_1\ne\gamma_2$.

\item Since the argument applies to every admissible $\gamma\ge0$, incidence alone is compatible with an infinite family of distinct $\{\beta_{c,\gamma}(t),\gamma\}$ pairs. Thus the pair is structurally non-identifiable from incidence alone.
\end{enumerate}
\end{proof}

The same daily case counts can be explained by a short infectious duration with one transmission curve or by a longer infectious duration with another curve. Therefore, case counts alone cannot tell us both how quickly people leave the infectious class and how transmission changed over time. This is a concrete SIR instance of the broader identifiability problem encountered when multiple mechanisms can compensate for one another in epidemic models \citep{eisenberg2013}.

This theorem has an immediate practical consequence: if $\gamma$ is fitted simultaneously with a highly flexible $\beta(t)$ using only case incidence, a narrow posterior can be prior- or regularization-driven rather than data-identified. Removal data, prevalence surveys, infectious-duration information, hospitalization/isolation transitions, or informative external priors are needed to interpret $\gamma$ and therefore $\beta(t)/\gamma$ more strongly. Consequently, in this paper, we focus on estimating time-varying $\beta(t)$ under a fixed, preferably evidence-informed, removal rate.

\section{A complete analysis protocol}
We recommend the following sequence rather than fitting the most flexible model immediately.

\begin{enumerate}
\item{Preprocess transparently.} Preserve raw counts, specify smoothing used only for peak detection, and define the analysis window before fitting.
\item{Detect robust peaks.} Use a minimum separation and prominence threshold; repeat the analysis under alternative reasonable thresholds.
\item{Compute an oracle benchmark.} For externally fixed $\gamma$, use Theorem~\ref{thm:incidence-representability} or its discrete analogue to reconstruct the exact effective transmission needed to reproduce the smoothed incidence. This reveals the amount and timing of forcing implied by the data.
\item{Fit nested regularized models.} Fit $K=2,3,\ldots,K_{\max}$ using peak-aware ABC.
\item{Select $K^\star$.} Choose the smallest $K$ that passes posterior predictive peak and curve criteria.
\item{Stress-test identifiability.} Refit across plausible $\gamma$, reporting, smoothing, and initial-condition assumptions. If $\beta(t)$ changes materially while incidence fit remains stable, mechanistic claims about $\beta(t)$ should be weakened.
\item{Add external data when mechanism matters.} Removal/recovery series, prevalence or serology, mobility, contact surveys, variant frequencies, intervention dates, age-stratified cases, or spatial data can distinguish otherwise lumped drivers.
\end{enumerate}

Refer to Appendix \ref{app:beginner-guide} for the 
beginner's guide to using peak-preserving SIR compression.

\section{Simulation study: six stress tests, empirical deployment, and projection bridge}
\label{sec:simulation}
We present six synthetic stress tests, a Philippine COVID-19 reported-case proof-of-concept, and a future-knot projection experiment that illustrates how retrospective compression can be extended cautiously to short-horizon forecasts. Unless stated otherwise, the SIR compression model used $N=10^6$, fixed $\gamma=1/7$ day$^{-1}$, a positive piecewise-linear basis in $\log\beta(t)$, and a 7-day smoothing window for peak detection. The app offers ABC for exploratory inference, whereas the manuscript figures use the deterministic oracle--knot--forward-simulation workflow specified in the accompanying \texttt{simulation.R}.

\begin{landscape}
\begingroup
\scriptsize
\setlength{\tabcolsep}{2pt}
\renewcommand{\arraystretch}{1.18}
\sloppy

\begin{longtable}{@{}
>{\raggedright\arraybackslash}p{0.14\linewidth}
>{\raggedright\arraybackslash}p{0.38\linewidth}
>{\raggedright\arraybackslash}p{0.26\linewidth}
>{\raggedright\arraybackslash}p{0.16\linewidth}
@{}}
\caption{Completed simulation suite and reproducibility settings. Numerical
results are from the fixed-seed reproducible run.}
\label{tab:simulation-reproducibility}\\
\toprule
Figure and experiment &
Generator equation &
Numerical settings and compact-SIR procedure &
Completed result and interpretation\\
\midrule
\endfirsthead

\toprule
Figure and experiment &
Generator equation &
Numerical settings and compact-SIR procedure &
Completed result and interpretation\\
\midrule
\endhead

\bottomrule
\endfoot

\textbf{Figure 1}\par
Controlled one- through five-wave epidemics
&
\(\displaystyle
\begin{aligned}
\mu(t)
&=25+\sum_jH_j
  \exp\!\left[-\{(t-\psi_j)/s_j\}^2/2\right],\\[-1pt]
y_t&\sim\mathrm{NegBin}(\mu_t,\phi=80).
\end{aligned}
\)

\smallskip
The noiseless incidence target is mapped to a discrete SIR oracle.
&
\(\displaystyle
\begin{aligned}
\textstyle N=10^6, S_0=995000, I_0=800, \gamma=1/7.
\end{aligned}
\)

Candidate dimensions are $K=3,\ldots,22$. Oracle beta is projected to each
$K$-knot $\log\text{-}\beta$ basis and propagated by forward SIR simulation. For each
$K$, 120 independently generated negative-binomial realizations are assessed.
&
\(\displaystyle K^\star=8,8,11,14,18\) for one through five waves.
Selected-$K$ adequacy probabilities are $0.99,0.96,0.76,0.68,0.74$.
MPPC depends on wave geometry, noise, and basis placement, not peak count
alone.\\
\addlinespace

\midrule
\textbf{Figure 2}\par
SEIR compression
&
\(\displaystyle
\begin{aligned}
\dot S&=-\beta(t)SI/N,\\
\dot E&=\beta(t)SI/N-\sigma_EE,\\
\dot I&=\sigma_EE-\gamma I,\\
\dot R&=\gamma I,
\end{aligned}
\)

\(\displaystyle
\begin{aligned}
\beta(t)={}&0.055
 +0.31e^{-(t-38)^2/(2\cdot15^2)}\\
&+0.34e^{-(t-118)^2/(2\cdot17^2)}\\
&+0.37e^{-(t-196)^2/(2\cdot15^2)}.
\end{aligned}
\)
&
\(\displaystyle
(S_0,E_0,I_0,R_0)=(998400,800,800,0)
\)

\smallskip
\(\displaystyle \sigma_E=1/4, \gamma=1/7.\)

\smallskip
A deterministic incidence oracle is projected to a $\log\text{-}\beta$ knot basis and
forward-simulated with compact SIR dimension $K=14$.
&
The compact SIR fit preserves 3 of 3 major peaks; log-NRMSE is $0.362$.
The fit compresses SEIR incidence, but its effective beta is not the original
SEIR transmission pulse.\\
\addlinespace

\midrule
\textbf{Figure 3}\par
Age-structured aggregate compression
&
For $i=1,2,3$:
\[
\textstyle
\begin{aligned}
\dot S_i&=-S_i\sum_jg_i(t)C_{ij}I_j/N_j,\\
\dot I_i&=S_i\sum_jg_i(t)C_{ij}I_j/N_j-\gamma I_i,\\
\dot R_i&=\gamma I_i.
\end{aligned}
\]
&
Population shares are $(0.30,0.45,0.25)$, with
\[
\textstyle
C=
\begin{psmallmatrix}
3.4&1.2&0.5\\
0.9&3.0&0.8\\
0.4&1.0&2.3
\end{psmallmatrix}.
\]

Group phases are $(-9,0,11)$ days,
$I_0=(900,600,300)$, and $\gamma=1/7$. Only aggregate incidence is supplied
to deterministic oracle--knot--forward-SIR compression with $K=15$.
&
The aggregate SIR fit preserves 3 of 3 broad peaks. Aggregate wave topology
can be retained even though subgroup incidence, group-specific forcing, and
the contact matrix are not identified by the aggregate SIR fit.\\
\addlinespace

\midrule
\textbf{Figure 4}\par
Two-strain aggregate compression
&
\(\displaystyle
\begin{aligned}
\dot S&=-S(\beta_1I_1+\beta_2I_2)/N,\\
\dot I_1&=\beta_1SI_1/N-\gamma I_1,\\
\dot I_2&=\beta_2SI_2/N-\gamma I_2+\iota(t),\\
\dot R&=\gamma(I_1+I_2)-\iota(t).
\end{aligned}
\)

\(\displaystyle
\begin{aligned}
\beta_1(t)&=0.050+0.32e^{-(t-45)^2/(2\cdot13^2)},\\
\beta_2(t)&=0.050+0.38e^{-(t-160)^2/(2\cdot17^2)},\\
\iota(t)&=45e^{-(t-110)^2/(2\cdot3^2)}.
\end{aligned}
\)
&
The strains share the same susceptible population and use $\gamma=1/7$.
A deterministic incidence oracle is compressed into a single-strain SIR with
$K=14$. Frequency-weighted transmissibility is plotted only as a diagnostic.
&
The compact SIR fit preserves 2 of 2 major peaks; log-NRMSE is $0.192$.
The fitted beta is a composition-dependent effective forcing, not a
strain-specific biological transmission parameter.\\
\addlinespace

\midrule
\textbf{Figure 5}\par
Removal-rate sweep and exact non-identifiability
&
For the same fixed three-wave incidence:
\[
\textstyle
\begin{aligned}
S_{t+1}&=S_t-c_t,\\
I_{t+1}&=I_t+c_t-\gamma I_t,\\
\beta_t&=\frac{Nc_t}{S_tI_t}.
\end{aligned}
\]
&
\(\displaystyle
\gamma\in\left\{
\frac1{14},\frac1{10},\frac17,\frac15,\frac13
\right\}.
\)

\smallskip
No ABC or compact-knot fitting is used. The incidence trajectory is held fixed
and reconstructed exactly under each removal-rate assumption.
&
All five choices reproduce incidence exactly for infectious durations from 3
to 14 days. The $I_{\max}$ ratio is $3.90$ and the $\beta_{\max}$ ratio is
$2.51$. Incidence alone does not identify removal, prevalence, or transmission
separately.\\
\addlinespace

\midrule
\textbf{Figure 6}\par
Peak-topology robustness to reporting noise
&
\(\displaystyle
y_t\sim\mathrm{NegBin}(\mu_t,\phi),
\qquad
\phi\in\{\infty,100,30,10,3\}.
\)
&
There are 120 independent reporting realizations at each noise level. Peak
detection uses a 7-day smooth, 18-day minimum separation, and 7\% prominence
threshold. No SIR fit is performed.
&
Correct three-peak recovery is
$100\%,\ 100\%,\ 100\%, 94.2\%,\ 32.5\%$
from Poisson noise to $\phi=3$. Severe overdispersion produces false peaks and
unstable observed topology.\\
\addlinespace

\midrule
\textbf{Figure 7}\par
Philippine COVID-19 2020--2023 compact reconstruction
&
Reported daily cases from 30 January 2020 to 21 April 2023 are smoothed over
7 days to define the incidence oracle. A positive $\log\text{-}\beta$ grid is calibrated
by repeated forward-SIR simulation with a weak second-difference penalty.
&
\(\displaystyle
N=109\text{ million},\qquad
S_0=108999900,
\)

\smallskip
\(\displaystyle
I_0=10, \gamma=1/7.
\)

\smallskip
A data-span-derived broad-wave separation rule automatically detects five
broad waves and allocates local knot resolution; the number of waves is not
fixed manually. The compact reconstruction uses $K=36$.
&
There are 1,178 observations and 4,087,964 reported cases. Overall log-NRMSE
against raw reports is $0.185$; peak-window log-NRMSE against the 7-day target
is $0.156$. This is descriptive reconstruction, not daily-oracle fitting or
causal attribution.\\
\addlinespace

\midrule
\textbf{Figure 8}\par
Future-knot projection bridge from Philippine COVID-19 2020
&
The historical 2020 reconstruction supplies pre-boundary
$\widehat{\beta}(t)$. Future log-transmission knots generate
$\beta_{\mathrm{base}}(t)$.

\smallskip
The variant-informed scenario is
\[
\textstyle
\beta_{\mathrm{var}}(t)
=
\min\left\{
0.19,\,
\beta_{\mathrm{base}}(t)
[1+0.75r_{\mathrm{var}}(t)]
\right\}.
\]
&
Only 2020 Philippine COVID-19 data initialize the projection. There are 8
future knots over a 90-day horizon. The advance-information scenario specifies
a variant arriving 35 days after the boundary, a 1.75-fold transmission
multiplier, and a 7-day transition. The 2021 series is retained only as an
orientation curve.
&
Four panels compare no-advance-information and variant-informed scenarios:
incidence on the top row and beta on the bottom row. The variant scenario is
conditional on its declared arrival, multiplier, transition, and beta ceiling;
it is not a retrospective fit to 2021 or validated long-term forecasting.\\

\end{longtable}

\endgroup
\end{landscape}

For clarity, the distance used in the fitted examples is
\begin{equation}
D
=
d_{\mathrm{curve}}
+
4\left(
d_{\mathrm{count}}
+
d_{\mathrm{time}}
+
0.35d_{\mathrm{height}}
+
0.25d_{\mathrm{trough}}
\right)
+
0.5d_{\mathrm{cum}},
\label{eq:distance-used-in-experiments}
\end{equation}
which is the numerical diagnostic used to compare trajectories. Equivalently,
the component weights are
\[
w_c=1,\qquad
w_n=4,\qquad
w_t=4,\qquad
w_h=1.4,\qquad
w_v=1,\qquad
w_q=0.5.
\]
Thus, the whole-curve discrepancy has baseline weight \(1\), peak count and
peak timing receive weight \(4\), peak height receives the effective weight
\(4(0.35)=1.4\), trough depth receives the effective weight \(4(0.25)=1\),
and cumulative-incidence timing receives weight \(0.5\).

For the controlled MPPC scan, adequacy was evaluated over 120 independently
generated negative-binomial observation realizations for each \(K\). A
realization had to reproduce the detected peak count and satisfy log-scale
normalized RMSE \(\le 0.72\), peak-time MAE \(\le 7\) days, peak-height
MARE \(\le 0.35\), and trough-depth MARE \(\le 0.50\). A value of \(K\)
was classified as adequate when at least \(60\%\) of the realizations passed
these criteria. The other synthetic compression panels are single
deterministic forward simulations from their stated generators, and the
reporting-noise sweep uses 120 independent replicates per noise level.

The weights in \(D\) are not fixed laws. They are user-chosen priorities that
tell the fitting or screening procedure which kinds of mismatch should be
treated as more serious. The value \(w_c=1\) sets the whole-curve fit as the
baseline: the fitted model should still follow the overall epidemic
trajectory, not only match the detected peaks. The value \(w_n=4\) strongly
penalizes the wrong number of peaks. This is important because the goal of
the analysis is peak preservation: a fit with one peak should not be
considered adequate if the observed curve has three meaningful peaks, even
when the average curve error is acceptable.

The value \(w_t=4\) strongly penalizes wrong peak timing. Thus, a model is
not considered adequate merely because it reproduces the correct number of
waves; the waves must also occur at approximately the correct times. The
value \(w_h=1.4\) gives moderate emphasis to peak height. Peak height is
epidemiologically important, but it is weighted less heavily than peak count
and peak timing because it is often more sensitive to reporting noise,
surveillance changes, and stochastic fluctuation.

The value \(w_v=1\) penalizes mismatch in trough depth, but less strongly
than peak count and peak timing. This is reasonable because troughs help
distinguish genuinely separated waves from small undulations, but trough
estimates can be unstable when reported incidence is low. The value
\(w_q=0.5\) gives a smaller penalty to cumulative-incidence timing. This term
helps prevent fits that match the visible peaks but shift the overall
epidemic burden too early or too late.

The reason for this weighting scheme is that MPPC is not ordinary curve
fitting. It is peak-preserving compression. Therefore, the distance must
prioritize whether the fitted trajectory preserves the topology and geometry
of the observed epidemic waves.

\subsection{Experiment 1: controlled one- through five-peak incidence}
We generated one through five well-separated Gaussian incidence waves over 241 days and used the incidence representability theorem to map each noiseless target to a discrete SIR oracle for the selected $\gamma$. For each $K=3,\ldots,22$, the oracle beta was projected to a $K$-knot $\log\text{-}\beta$ basis and then propagated through the forward SIR state updates. Negative-binomial reporting noise with dispersion $\phi=80$ was generated afresh in 120 replicates per $K$ to assess adequacy. Figure~\ref{fig:controlled-peaks} summarizes this simulation-based scan. The resulting $K^\star$ values are design-conditional operational minima: they depend on the specified wave geometry, observation window and resolution, smoothing and peak-detection rules, reporting-noise model, knot basis, and adequacy thresholds. They should therefore not be interpreted as universal values determined solely by the number of epidemic peaks.

The selected fast-scan MPPC values were $K^*=8,8,11,14,$ and $18$ for one through five designed peaks, respectively. At the selected $K^\star$, the simulated adequacy probabilities were 0.99, 0.96, 0.76, 0.68, and 0.74. Importantly, $K^\star$ was not strictly monotone in the number of peaks: the one- and two-wave geometries required the same fitted dimension. This is a useful refinement of the MPPC concept: dynamic complexity depends on peak width, separation, relative height, trough depth, observation noise, and basis placement, not merely on the integer number of waves.

\begin{figure}[htbp]
\centering
\includegraphics[width=0.88\textwidth]{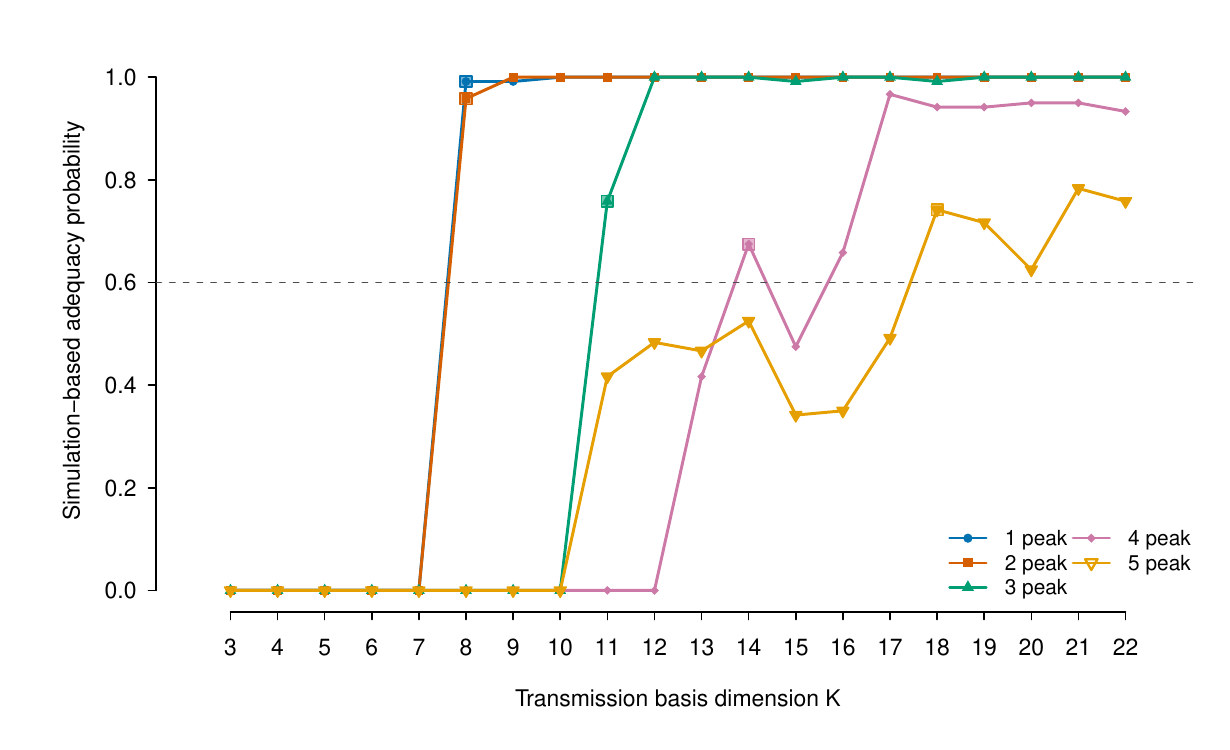}
\caption{Minimal peak-preserving complexity across one- through five-wave epidemics. Curves show the simulated adequacy probability across nested transmission dimensions; the dashed line is the 0.60 threshold and square markers identify the first adequate $K$ when one exists. MPPC depends on wave geometry and noise rather than peak count alone.}
\label{fig:controlled-peaks}
\end{figure}

For Experiments~\ref{fig:seir-compression}--\ref{fig:strain-compression}, each asks a different question: can SIR hide an exposed stage, can it hide age-group heterogeneity, and can it hide variant turnover? A similar case curve can still come from different hidden stories.

\subsection{Experiment 2: SEIR forcing compressed into SIR}
We generated a three-wave SEIR epidemic with mean latent duration 4 days, mean infectious duration 7 days, and three smooth transmission pulses. The compact SIR omitted the exposed compartment. With $K=14$, its deterministic forward trajectory preserved all three detected peaks (log-scale normalized RMSE 0.362). The effective transmission history is therefore a trajectory-compression device rather than the SEIR transmission parameter: omitted latency changes the relation between the generator pulse and the single-SIR forcing (Figure~\ref{fig:seir-compression}).

\begin{figure}[htbp]
\centering
\includegraphics[width=0.84\textwidth]{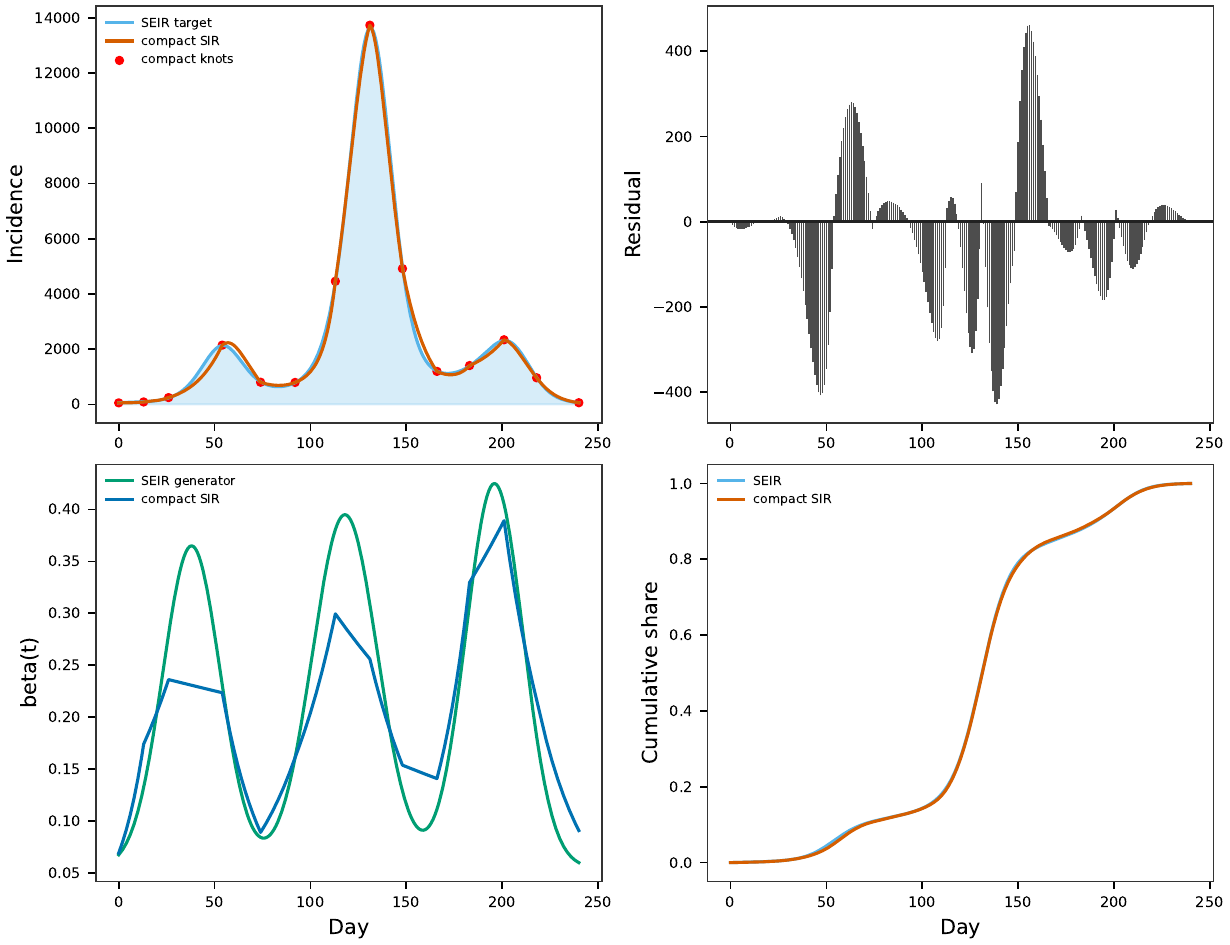}
\caption{SEIR-to-SIR compression. The shaded incidence panel compares the latent-stage target with the SIR compression; the residual panel shows where the simple model misses timing or amplitude; the beta panel compares generator transmission with fitted effective $\beta(t)$; and the cumulative panel checks whether total epidemic timing is preserved. Red dots mark the compact SIR knots. The recovered $\beta(t)$ is a shifted effective forcing, not the original SEIR transmission pulse.}
\label{fig:seir-compression}
\end{figure}

\subsection{Experiment 3: age-mixing heterogeneity}
We generated incidence from a three-group SIR system with a fixed non-diagonal contact matrix and group-specific phase shifts in three forcing episodes, reflecting the well-established importance of age-assortative social mixing in transmission models \citep{mossong2008,prem2017}. Only total incidence was supplied to the compact SIR reconstruction. With $K=15$, peak-aware knot placement followed by forward-SIR calibration improves the aggregate trajectory while preserving all three automatically detected peaks. This good aggregate match must not be read as recovery of subgroup incidence or the contact matrix: neither enters the reconstruction target. The experiment therefore identifies a boundary of dynamic parsimony: aggregate wave topology can survive compression even when subgroup dynamics are not recoverable (Figure~\ref{fig:age-compression}).

\begin{figure}[htbp]
\centering
\includegraphics[width=0.84\textwidth]{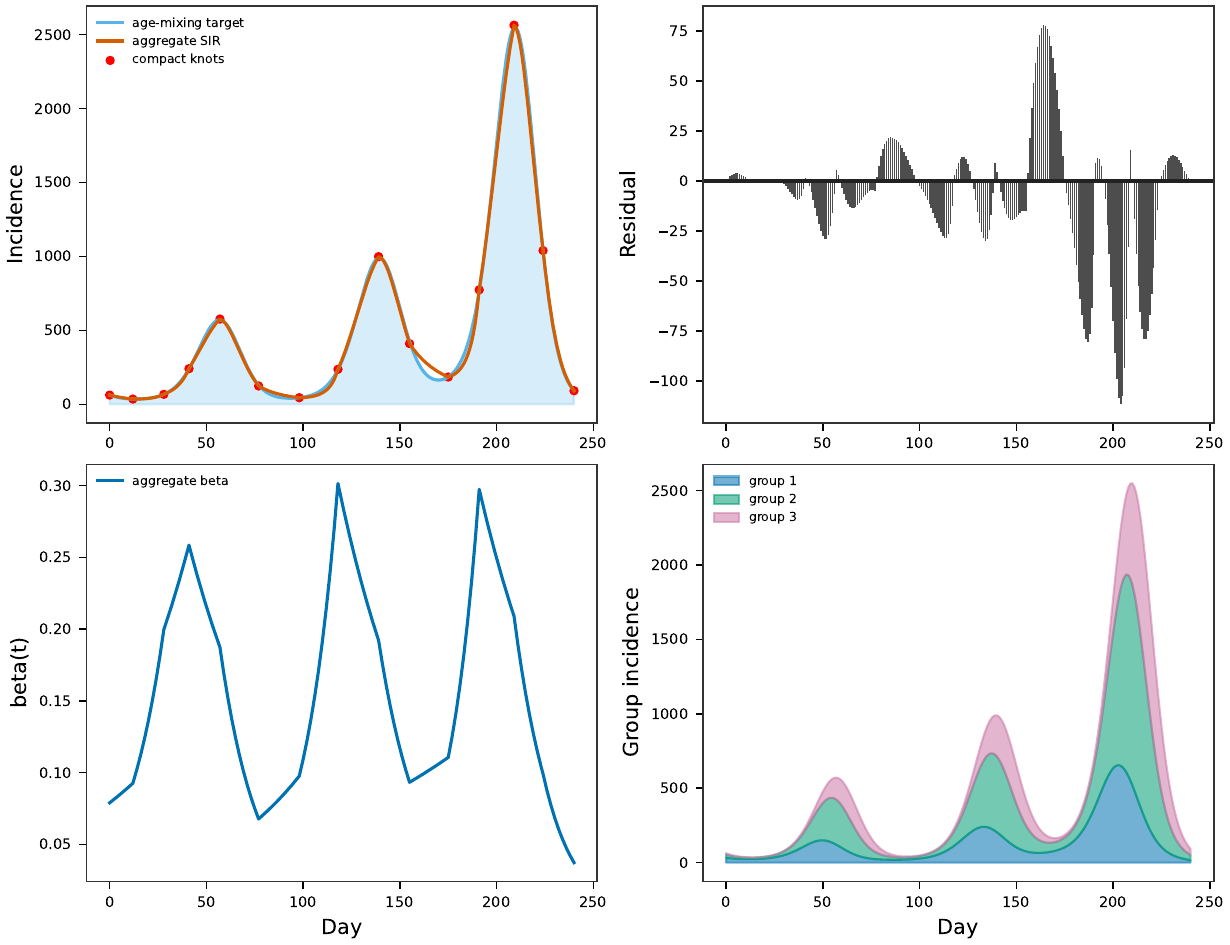}
\caption{Age-structured-to-aggregate SIR compression. The incidence and residual panels evaluate aggregate fit, the beta panel shows the single fitted effective $\beta(t)$, and the group-composition panel shows the hidden age-specific incidence that the aggregate SIR did not observe. Red dots mark the compact SIR knots. The aggregate curve can be acceptable even when subgroup dynamics are not identified.}
\label{fig:age-compression}
\end{figure}

\subsection{Experiment 4: two-strain replacement}
We simulated two competing strains sharing the same susceptible pool. Strain 1 generated an early wave; strain 2 was introduced later and had a distinct higher transmission pulse. A single-strain SIR with $K=14$ preserved both aggregate peaks in its deterministic forward reconstruction (log-scale normalized RMSE 0.192). This illustrates how a simple effective forcing can absorb variant turnover while simultaneously warning that the resulting $\beta(t)$ is a composition-dependent lumped quantity, not a strain-specific biological parameter (Figure~\ref{fig:strain-compression}).

\begin{figure}[htbp]
\centering
\includegraphics[width=0.84\textwidth]{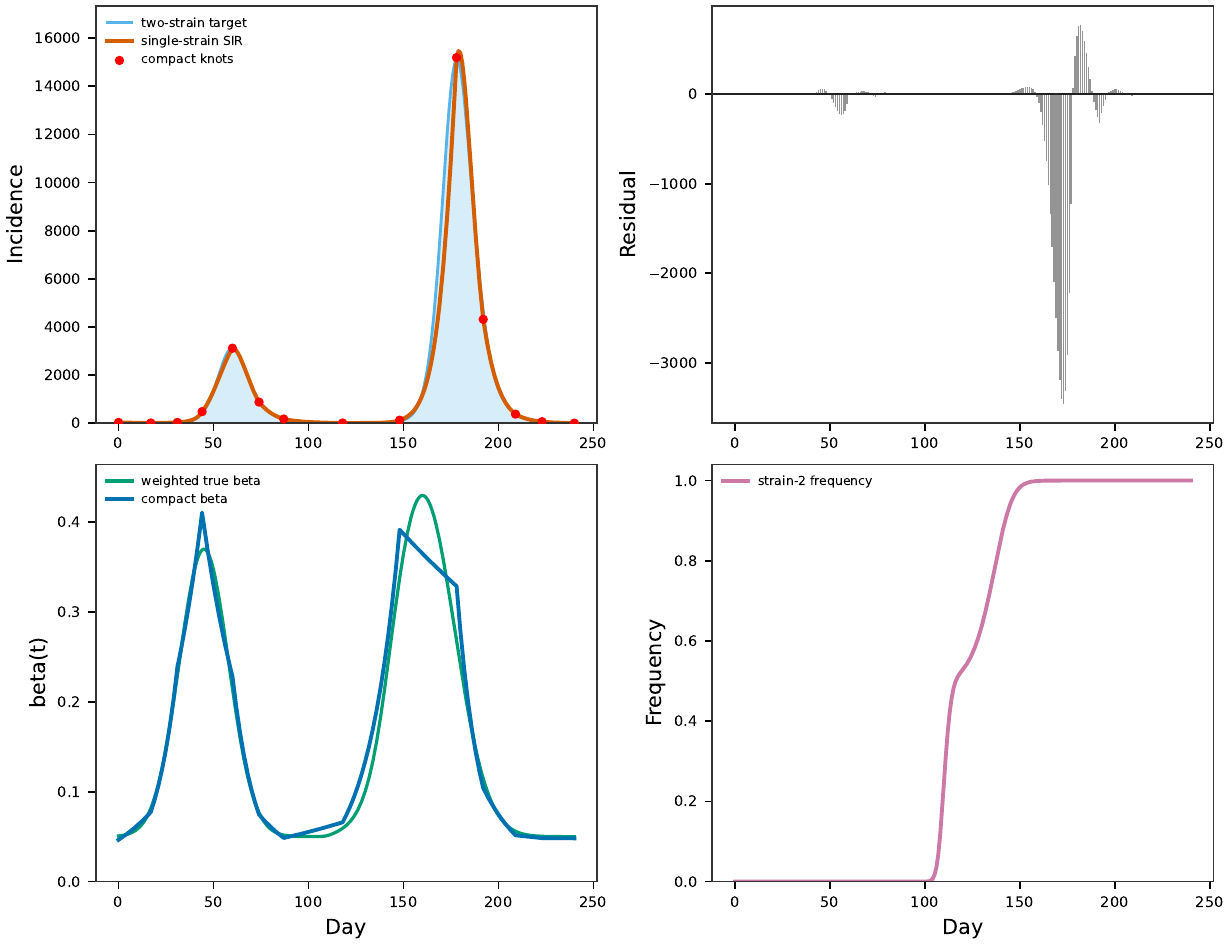}
\caption{Two-strain-to-single-SIR compression. The incidence and residual panels evaluate the aggregate single-SIR fit, the beta panel compares fitted $\beta(t)$ with the frequency-weighted transmissibility, and the final panel shows strain-2 frequency. Red dots mark the compact SIR knots. The fitted $\beta(t)$ is therefore a composition-dependent effective quantity, not a strain-specific biological parameter.}
\label{fig:strain-compression}
\end{figure}

\subsection{Experiment 5: removal-rate sweep and exact non-identifiability}
To demonstrate Theorem~\ref{thm:gamma-nonident} directly, we fixed the same smooth three-peak incidence curve and reconstructed the exact discrete SIR oracle under $\gamma\in\{1/14,1/10,1/7,1/5,1/3\}$ day$^{-1}$. Incidence RMSE is exactly zero for every reconstruction by construction. Nevertheless, peak infectious prevalence changes from approximately 50,068 at a 14-day infectious duration to 12,831 at a 3-day duration, a 3.90-fold range. Maximum reconstructed $\beta(t)$ changes from 0.237 to 0.595 day$^{-1}$, a 2.51-fold range, and the corresponding $\Reff(t)$ trajectories also differ. Therefore incidence fit alone cannot determine removal, infectious prevalence, and effective transmission separately (Figure~\ref{fig:gamma-sweep}).

\begin{figure}[htbp]
\centering
\includegraphics[width=0.84\textwidth]{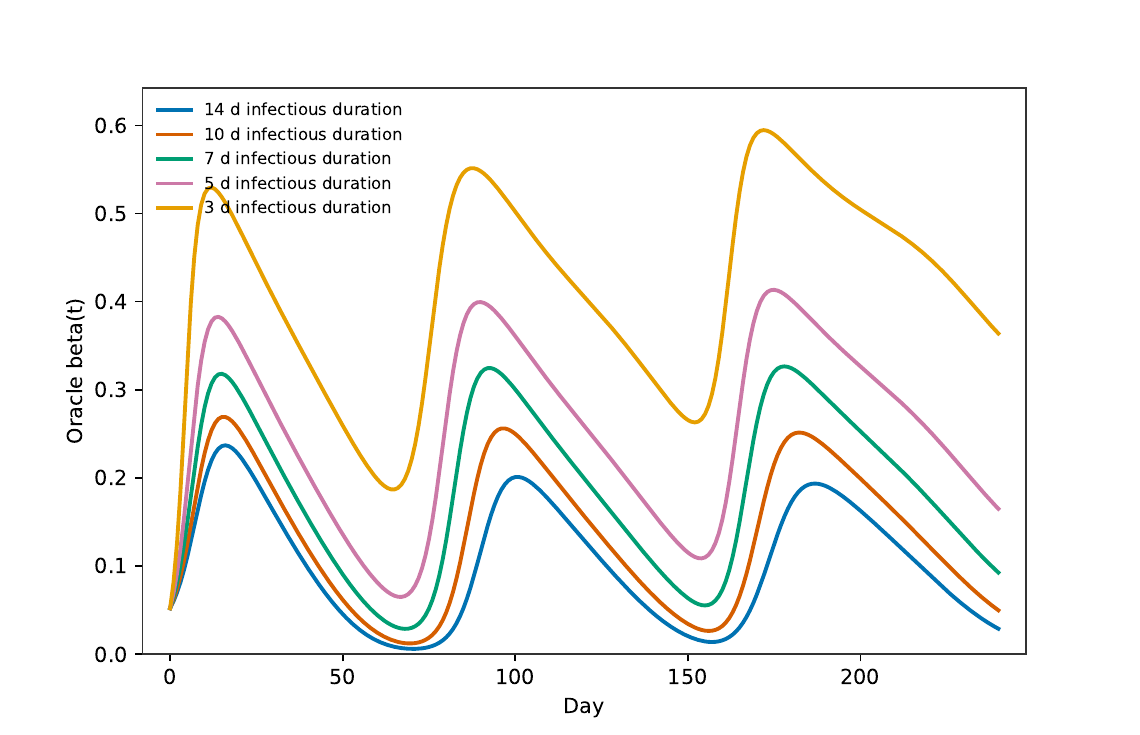}
\caption{Removal-rate non-identifiability under incidence-equivalent oracle reconstruction. A fixed controlled three-peak incidence curve is represented exactly for five removal rates. The observed incidence is unchanged with a fitting error of $\text{RMSE}=0$, but the implied infectious prevalence, effective transmission history $\beta(t)$, and effective reproduction trajectory $\mathcal R_{\mathrm{e}}(t)$ differ substantially. Legend: Each line corresponds to a different infectious duration, $1/\gamma$.}
\label{fig:gamma-sweep}
\end{figure}

If the parameters were identifiable, only one specific value of $\gamma$ would successfully fit the incidence curve, and all other $\gamma$ values would fail or produce massive fit errors. The fact that any plausible $\gamma$ works (by forcing $\beta(t)$ to compensate) implies structural non-identifiability.

The practical message of Figure~\ref{fig:gamma-sweep} is that the same reported case curve can hide very different underlying epidemics. A short infectious duration and a long infectious duration can both reproduce the same incidence, but they imply different numbers of infectious people and different transmission histories. Thus, a good case-curve fit alone is not enough to claim that the model has identified the true infectious period, hidden prevalence, or causal transmission rate.

\subsection{Experiment 6: negative-binomial reporting noise}
We quantified how reporting overdispersion affects the peak topology supplied to the fitting algorithm. For each of five noise levels we generated 120 independent realizations of the same three-peak mean curve and applied the fixed 7-day, 18-day-separation, 7\%-prominence detector. Correct peak-count recovery was 100\% under Poisson noise, 100\% at $\phi=100$, 100\% at $\phi=30$, 94.2\% at $\phi=10$, and only 32.5\% at $\phi=3$. At $\phi=3$, the detector generated an average of 0.89 extra peaks per series and mean peak-time MAE was 11.3 days. The practical implication is important: peak-aware fitting cannot rescue a peak topology that is itself unstable to the observation process. Peak-detection sensitivity analysis is therefore part of model validation rather than a cosmetic preprocessing choice (Figure~\ref{fig:noise-sweep}).

\begin{figure}[htbp]
\centering
\includegraphics[width=0.84\textwidth]{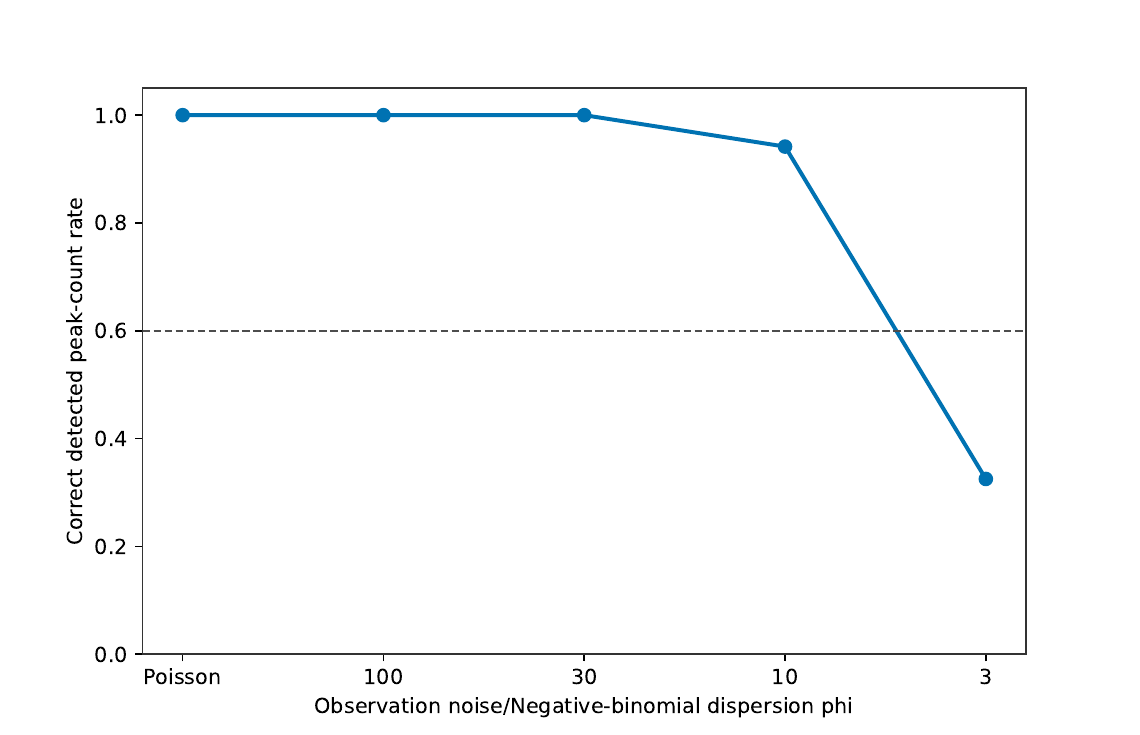}
\caption{Peak-topology robustness to reporting noise. Strong negative-binomial overdispersion can create spurious detected waves even when the latent mean trajectory has exactly three peaks.}
\label{fig:noise-sweep}
\end{figure}

\subsection{Experiment 7: Philippine COVID-19 reported-case proof-of-concept}
We use the Philippine COVID-19 reporting-date series as an empirical proof-of-concept, complementing descriptive analytics work that summarizes the Philippine pandemic trajectory using epidemiological, mobility, policy, and timeline information \citep{bansilan2024descriptive}. The data contain reported cases from 30 January 2020 to 21 April 2023. The daily series contains 1,178 observations and 4,087,964 reported cases. A 7-day smooth and data-span-derived minimum separation automatically identify five broad waves; the rule suppresses reporting sub-peaks without setting the number of waves by hand.

For Figure~\ref{fig:philippines-covid}, the reconstructed epidemic curve uses a compact $K=36$ $\log\text{-}\beta$ basis over 1,178 daily observations. Knot locations are selected deterministically from the five automatically detected broad waves and then the knot values are calibrated by repeated forward SIR simulation, with a weak curvature penalty to discourage one-day reporting artifacts. The curve is intentionally descriptive rather than daily-oracle-perfect; the script writes its reconstruction error and beta range with each run. This illustrates the MPPC principle: use enough knots to preserve important waves without treating every reported fluctuation as a mechanism.

\begin{figure}[htbp]
\centering
\includegraphics[width=0.88\textwidth]{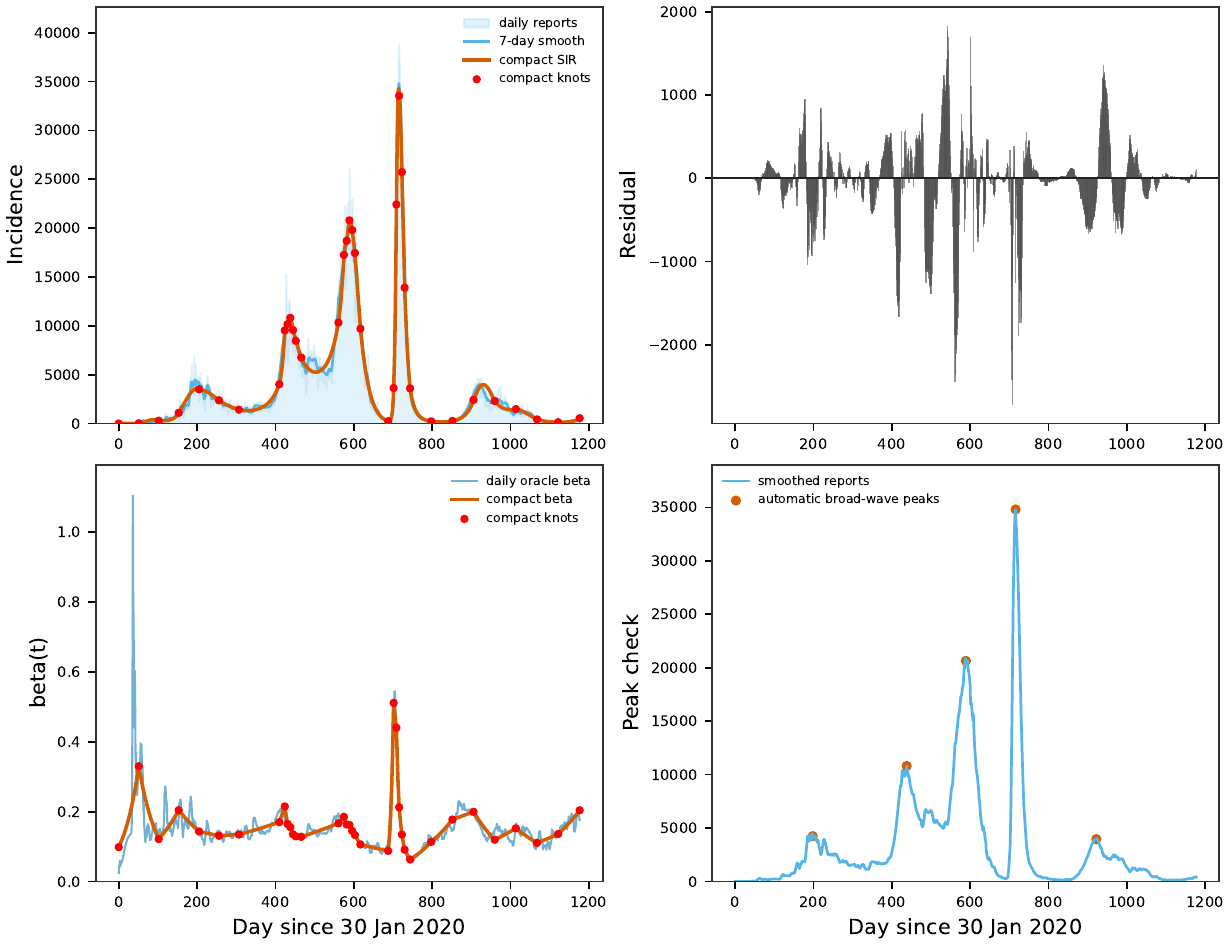}
\caption{Philippine COVID-19 reported cases and compact peak-preserving forward-SIR reconstruction for 2020--2023. The incidence panel shades daily reports and overlays the 7-day smooth and calibrated 36-knot forward simulation; the residual panel shows the remaining reconstruction error; the beta panel shows the daily oracle proxy and compact $\widehat\beta(t)$; and the peak-check panel marks five automatically detected broad waves. Red dots mark the compact SIR knots in the incidence and beta panels. The fitted $\beta(t)$ is an effective reporting-date forcing, not a direct causal estimate of contact behavior.}
\label{fig:philippines-covid}
\end{figure}

The Philippine COVID-19 example shows how the method behaves on real surveillance data rather than only simulated curves. Figure~\ref{fig:philippines-covid} uses a compact peak-preserving reconstructed curve, so the curve follows the major reported peaks without pretending to reproduce every daily reporting fluctuation. The message is descriptive: it summarizes the shape of reported cases and the effective beta needed to reproduce that shape, but it does not by itself identify which policies, behaviors, mobility trend, variants, testing changes, vaccination effects, local outbreaks, or reporting delays caused each movement.

\subsection{Experiment 8: modelling future knot values for short-term projection using Philippine COVID-19 2020 data}
The fitted retrospective curve is useful for description and short-term forecasting, but it is not automatically valid for long-term or ``what-if'' forecasts. The reason is the {lumped-driver problem}: $\beta(t)$ compresses many hidden causes into one function. If those causes change, extrapolating the last fitted curve can fail. It also creates {forward sensitivity to non-identifiability}: two retrospective fits with nearly identical incidence can imply very different future paths once their hidden assumptions are projected forward.

To bridge the gap without pretending that the simple model has identified all mechanisms, the simulation can add a future-knot layer. Let $\theta_j=\log\beta(\tau_j)$ be the fitted retrospective knot values. A minimal stochastic projection places future log-transmission knots after the observed window and evolves them as
\begin{equation}
\theta^+_{r+1}=\theta^+_r+a+\sigma_\theta\eta_r,
\qquad \eta_r\sim N(0,1),
\label{eq:future-rw-knot}
\end{equation}
where $a$ is estimated from recent retrospective knot drift and $\sigma_\theta$ is the empirical recent knot variability. This construction can generate a distribution of future $\beta(t)$ curves and therefore a distribution of future incidence trajectories. Figure~\ref{fig:future-knots} compares this no-advance-information continuation with a conditional advance-information scenario: an emerging variant is specified to arrive 35 days after the 2020 boundary, with a 1.75-fold transmission multiplier that ramps in over 7 days. If $r_{\mathrm{var}}(t)$ is the logistic replacement fraction, the latter uses $\beta_{\mathrm{var}}(t)=\min\{0.19,\beta(t)[1+0.75r_{\mathrm{var}}(t)]\}$. The 0.19/day ceiling is a declared short-horizon scenario bound, used to prevent implausibly explosive upper-tail paths. The 2021 reports are displayed only as an orientation curve and are not used to tune the multiplier. When external predictors are available, a more interpretable bridge is
\begin{equation}
\theta^+_{r}=\alpha_0+x_r^{+\top}\alpha+u_r,
\qquad
u_{r+1}=\varphi u_r+\sigma_u\eta_r,
\label{eq:future-covariate-knot}
\end{equation}
where $x_r^+$ may include projected mobility, school-calendar indicators, testing intensity, variant frequencies, rainfall, vaccination coverage, or policy indicators. The exogenous covariate model is preferable for scenario analysis because it states which future driver is being changed.

\begin{figure}[htbp]
\centering
\includegraphics[width=0.88\textwidth]{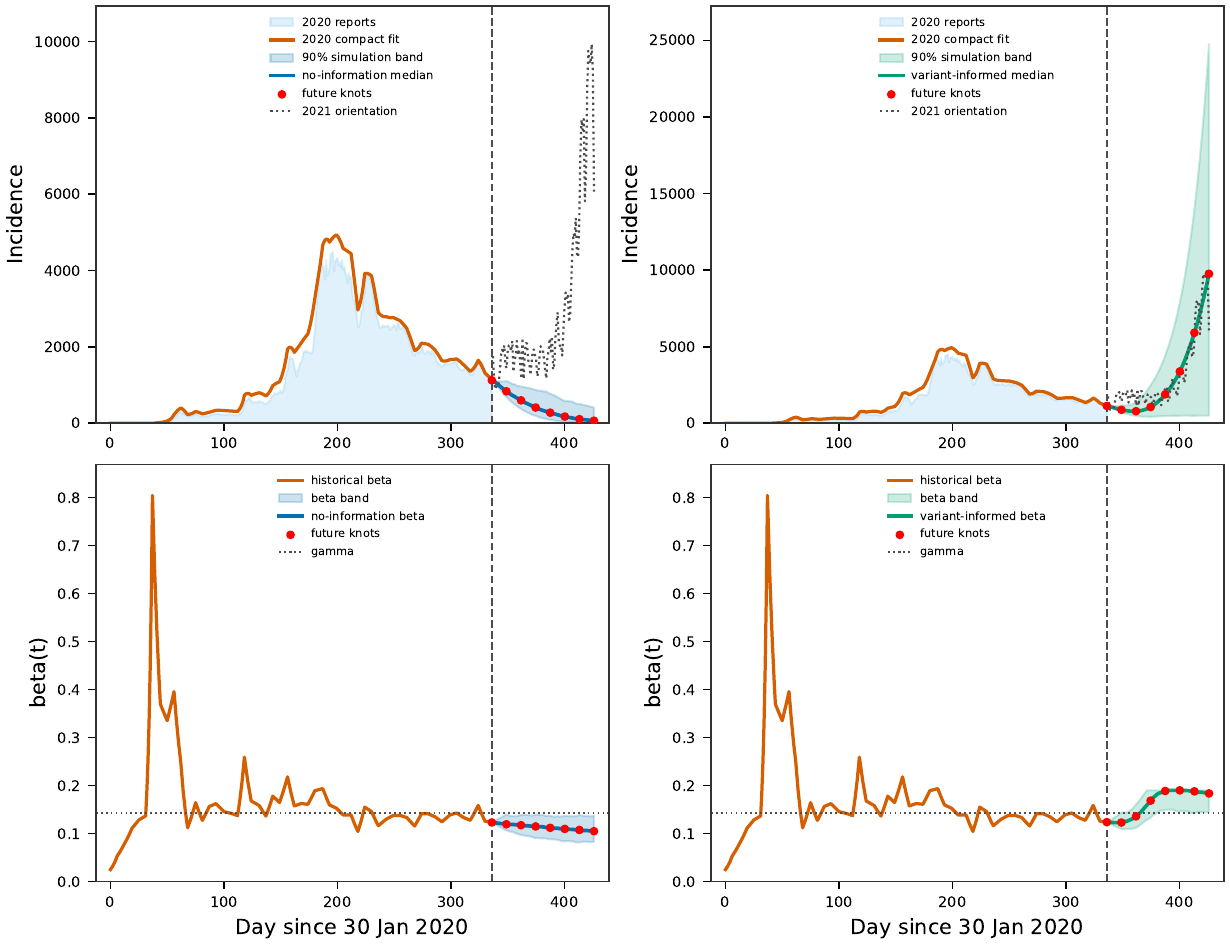}
\caption{Short-term future-knot projections initialized from Philippine COVID-19 2020 data. The top row shows incidence and the bottom row the corresponding beta trajectories. The left column is the no-advance-information continuation of the declining 2020 state; the right column is conditional on advance knowledge of a variant arriving 35 days after the boundary with a 1.75-fold transmission multiplier and a declared 0.19/day beta ceiling. Red dots mark the compact future knots. In both incidence panels, the dotted 2021 series is an orientation curve only; it is not used to estimate the scenario parameters. The figure illustrates why explicit advance information can change a forecast qualitatively, not validated prospective or long-term forecasting skill.}
\label{fig:future-knots}
\end{figure}

For forecasting, the retrospective fitted curve should not be treated as the data-generating truth. Instead, several plausible future bend-points for $\beta(t)$ can be specified and propagated to show the range of epidemic paths they imply. For policy ``what-if'' questions, those future bend-points should be linked to explicit drivers such as mobility, testing, variants, vaccination, or school opening rather than extrapolated only from the past.

\subsection{Cross-experiment interpretation}

For every experiment with an estimated or reconstructed time-varying transmission function, the reported curve can be written in the common form
\begin{equation}
\widehat\beta_e(t)=\exp\left\{(1-w_{e,j}(t))\widehat\theta_{e,j}+w_{e,j}(t)\widehat\theta_{e,j+1}\right\},
\qquad t\in[\tau_{e,j},\tau_{e,j+1}],
\label{eq:experiment-beta-output}
\end{equation}
where $e$ indexes the experiment and $w_{e,j}(t)=(t-\tau_{e,j})/(\tau_{e,j+1}-\tau_{e,j})$. Table~\ref{tab:beta-profiles} gives the resulting fitted or oracle-projected $\beta(t)$ summaries. The Philippine empirical row corresponds to the full 2020--2023 case series and the reconstructed epidemic curve shown in Figure~\ref{fig:philippines-covid}; the future-knot row is deliberately based only on Philippine COVID-19 data from 2020, as shown in Figure~\ref{fig:future-knots}.

\begin{table}[H]
\centering
\caption{Resulting time-varying transmission summaries by experiment. Values are per day. Controlled experiments and the removal-rate sweep use the exact incidence-oracle curve; compression and empirical experiments use fitted compact effective transmission; the future-knot row uses the median projected trajectory.}
\label{tab:beta-profiles}

\scriptsize
\setlength{\tabcolsep}{3pt}
\renewcommand{\arraystretch}{1.08}

\begin{tabularx}{\textwidth}{@{}
>{\raggedright\arraybackslash}X
>{\centering\arraybackslash}p{1.55cm}
>{\centering\arraybackslash}p{1.50cm}
>{\centering\arraybackslash}p{1.60cm}
>{\centering\arraybackslash}p{1.50cm}@{}}
\toprule
Experiment &
Complexity &
\shortstack{Minimum\\$\beta_{\mathrm{eff}}(t)$} &
\shortstack{Median\\$\beta_{\mathrm{eff}}(t)$} &
\shortstack{Maximum\\$\beta_{\mathrm{eff}}(t)$}\\
\midrule
Controlled 1 peak & $K^*=8$  & 0.027 & 0.144 & 0.338\\
Controlled 2 peaks & $K^*=8$ & 0.027 & 0.156 & 0.318\\
Controlled 3 peaks & $K^*=11$ & 0.029 & 0.185 & 0.327\\
Controlled 4 peaks & $K^*=14$ & 0.052 & 0.204 & 0.318\\
Controlled 5 peaks & $K^*=18$ & 0.052 & 0.223 & 0.344\\
SEIR compression & $K=14$ & 0.068 & 0.193 & 0.358\\
Age-structured aggregate compression & $K=15$ & 0.038 & 0.143 & 0.259\\
Two-strain aggregate compression & $K=14$ & 0.050 & 0.102 & 0.409\\
Removal-rate sweep & -- & 0.006 & 0.195 & 0.595\\
Philippine COVID-19 2020--2023 compact reconstruction & $K=36$ & 0.063 & 0.143 & 0.512\\
Future-knot projection from Philippine COVID-19 2020 & $K=8$ & 0.105 & 0.114 & 0.123\\
\bottomrule
\end{tabularx}

\vspace{2pt}
\parbox{\textwidth}{\scriptsize
\textit{Fit quality for the Philippine COVID-19 \(K=36\) reconstruction.}
The overall log-scale normalized RMSE against raw daily reports is 0.185;
the peak-window log-scale normalized RMSE against the 7-day-smoothed target
is 0.156. These are descriptive reconstruction errors, not prospective
forecast errors.}
\end{table}

Taken together, the eight experiments support four conclusions. First, multiple peaks are compatible with a structurally simple SIR whenever sufficiently flexible effective forcing is admitted. Second, the amount of required forcing is a property of wave geometry, resolution, and noise, not just wave count. Third, successful aggregate compression is deliberately weaker than mechanism identification: omitted latent stages, age mixing, and strain turnover can all be absorbed to different degrees into $\beta(t)$, while the $\gamma$ sweep proves that even perfect incidence reconstruction need not identify transmission and removal separately. Fourth, the Philippine COVID-19 and future-knot examples show how the same logic moves from synthetic validation to real reported-case reconstruction and short-horizon projection, provided the analysis remains explicit about lumped drivers and non-transportability. These results motivate reporting both trajectory adequacy and mechanism-relevant external validation.

Across the examples, the simple model is good at summarizing wave shapes, but different hidden causes or generators can produce similar-looking curves with  different scientific meanings. Therefore the fitted curve is useful, but it should not be over-interpreted without external evidence.

\section{Exploratory deployment for other diseases}
The framework is designed for public multi-wave surveillance series once the observation process and analysis window are stated explicitly. Suitable applications include COVID-19, influenza, dengue, measles, respiratory syncytial virus, and other infections with sufficiently resolved incidence data. A real-data analysis should report the original data source, reporting-date or onset-date limitations, revisions and missingness, smoothing used only for peak detection, assumptions about the population at risk, and the fact that reported cases generally undercount infections. The objective should be declared in advance: trajectory compression and descriptive forcing require less mechanistic information than causal attribution or intervention counterfactuals.

The Philippine COVID-19 example uses the 2020--2023 reporting-date series for descriptive reconstruction (Figure~\ref{fig:philippines-covid}), while the projection bridge deliberately uses only the 2020 portion for initialization (Figure~\ref{fig:future-knots}). This separation prevents the real-data proof-of-concept from being confused with a prospective validation exercise. The empirical claim is limited to dynamic compression unless external covariates or additional epidemiological observations are incorporated.

For real data, the method should be read as a disciplined summary of epidemic shape. It can say what kind of time-varying transmission curve would reproduce the reported waves, but it cannot by itself prove why the waves happened.

\subsection{Companion interactive application}
The accompanying Shiny app implements the full conceptual workflow, following the general framework in Figure \ref{fig:workflow}):
\begin{enumerate}
    \item upload a two-column daily incidence file or use built-in examples, including Philippine COVID-19 and the manuscript simulation generators;
    \item smooth only for peak detection and choose prominence/minimum separation;
    \item calculate the discrete incidence oracle for a chosen $\gamma$, $S_0$, and $I_0$;
    \item fit a positive piecewise-linear log-$\beta(t)$ representation using the peak-aware adaptive approximation;
    \item display posterior-like predictive incidence, residuals, effective transmission, and uncertainty summaries;
    \item compare nested numbers of log-$\beta$ knots to approximate MPPC;
    \item run the Simulation Lab examples with their equations, parameter values, and Eq.~\eqref{eq:distance-used-in-experiments} weights shown to the user;
    \item export posterior particles, fitted trajectories, reconstructed epidemic curves, beta summaries, figures, and tables.
\end{enumerate}
The Shiny app uses piecewise-linear basis in $\log\beta$ for speed and transparency. Other formulation can use generic smooth bases, such as cubic B-splines, with the same MPPC logic. The app retains upload support for user-supplied datasets while also providing the reproducible built-in examples used by the manuscript.

\begin{figure}[htbp]
\centering
\begin{tikzpicture}[
box/.style={draw,rounded corners,minimum width=3.25cm,minimum height=1.15cm,align=center,fill=gray!5},
arr/.style={-{Stealth[length=2.5mm]},thick}
]
\node[box] (data) {Observed incidence\\with multiple peaks};
\node[box,right=1.2cm of data] (oracle) {Exact incidence oracle\\$c(t)\mapsto\beta_{c,\gamma}(t)$};
\node[box,right=1.2cm of oracle] (reg) {Regularized positive\\$\log\beta_K(t)$ basis};
\node[box,below=1.25cm of reg] (abc) {Peak-aware ABC\\curve + peak geometry};
\node[box,left=1.2cm of abc] (mppc) {Nested $K$ comparison\\minimal adequate $K^\star$};
\node[box,left=1.2cm of mppc] (interp) {Trajectory compression\\$\neq$ mechanism identification};
\draw[arr] (data) -- (oracle);
\draw[arr] (oracle) -- (reg);
\draw[arr] (reg) -- (abc);
\draw[arr] (abc) -- (mppc);
\draw[arr] (mppc) -- (interp);
\draw[arr,dashed] (interp) -- (data);
\end{tikzpicture}
\caption{Peak-preserving dynamic parsimony workflow. The exact oracle establishes representability; regularization and MPPC determine how much time variation is minimally needed; inference is explicitly separated from mechanistic interpretation.}
\label{fig:workflow}
\end{figure}

The app is a guided laboratory: start with an epidemic curve, ask how much changing transmission is needed to reproduce the important waves, then check whether the result is only a good summary or whether additional data are needed for mechanism.

\subsection{Discrete incidence oracle used in the interactive application}
For a daily series, the app implements a discrete accounting analogue of Theorem~\ref{thm:incidence-representability}. Let $c_t$ denote the incidence assigned to day $t$ (raw or smoothed according to the selected oracle setting). The derivation is:
\begin{enumerate}[label={Step \arabic*.},leftmargin=*]
\item{Susceptible update.} Incidence removes $c_t$ individuals from $S$ during the interval, so
\[
S_{t+1}=S_t-c_t.
\]
The implementation caps $c_t$ below $S_t$ to prevent a negative susceptible state when a user supplies an incompatible population scale.

\item{Infectious update.} New infections enter $I$ and a fraction $\gamma$ leaves during a one-day Euler accounting step:
\[
I_{t+1}=I_t+c_t-\gamma I_t.
\]
For a general interval $\Delta t$, the corresponding first-order update is $I_{t+1}=I_t+c_t-\gamma I_t\Delta t$ if $c_t$ is an interval count; the implementation used here is designed for complete daily incidence series.

\item{Transmission reconstruction.} The discrete incidence identity is
\[
c_t=\beta_t\frac{S_tI_t}{N}.
\]
Solving for the effective transmission rate gives
\begin{equation}
\beta_t=\frac{Nc_t}{S_tI_t}.
\label{eq:discrete-oracle}
\end{equation}

\item{Effective reproduction number.} With externally chosen $\gamma$,
\[
\mathcal R_{\mathrm e,t}=\frac{\beta_tS_t}{\gamma N}.
\]
Changing $\gamma$ changes both the reconstructed infectious stock and the inferred $\beta_t$, even though the same incidence is preserved, exactly as demonstrated by Theorem~\ref{thm:gamma-nonident} and Experiment 5.
\end{enumerate}
The oracle is therefore an accounting benchmark rather than a causal estimator. Smoothing is used only when requested, and the raw incidence series remains available for comparison.

\section{Discussion}
This work reframes multi-wave epidemic modeling around a distinction between {structural state complexity} and {dynamic parameter complexity}. A model can remain structurally simple---only $S$, $I$, and $R$---while becoming dynamically expressive through a time-varying transmission function. Theorem~\ref{thm:incidence-representability} shows that, at the trajectory level, multiple peaks do not force the introduction of additional compartments. Theorem~\ref{thm:noresurgence} then identifies the dynamic requirement behind resurgences: transmission must increase sufficiently rapidly to overcome susceptible depletion in the effective reproduction number. Proposition~\ref{prop:incpeak} adds a separate condition for incidence peaks, which depend on both $\beta$ and $\dot\beta$.

This flexibility is simultaneously the main strength and main danger of the approach. With enough degrees of freedom, a time-varying transmission model can reproduce extremely broad classes of trajectories, as emphasized by inverse-problem studies \citep{pollicott2012,mummert2013}; more recent work shows that qualitative inverse-transmission patterns can be robust across model structures \citep{wang2026inverse}. The methodological gap addressed here is different: inverse reconstruction answers what forcing can reproduce the data, whereas MPPC asks how much forcing complexity is {minimally necessary} to preserve a prespecified epidemic topology. By requiring posterior-predictive adequacy for peaks, troughs, timing, and the complete trajectory across nested functional dimensions, MPPC turns flexibility into an auditable model-reduction quantity rather than an unrestricted curve-fitting privilege.

Peak-aware ABC is especially useful when the observed feature of interest is geometrical. Standard pointwise least squares can underweight a narrow but epidemiologically important resurgence, especially when a large first wave dominates the scale. Peak timing and magnitude are not new ABC summaries \citep{meyer2023}; what is new here is their role, together with peak count and trough structure, in a model-adequacy rule across nested transmission complexities. By including peak topology in the discrepancy and then defining MPPC as the first complexity level with adequate posterior-predictive peak preservation, the inference target is aligned with the scientific question while retaining an explicit parsimony principle.

The proposed framework also provides a rigorous warning about removal-rate inference. If only incidence is observed and $\beta(t)$ is unrestricted, every plausible removal rate can be paired with a different transmission function that gives the same incidence. Therefore, fitting a multi-peak incidence curve does not identify infectious duration, hidden prevalence, or causal transmission changes without additional data. This is exactly the layman message of Figure~\ref{fig:gamma-sweep}: the same reported curve can hide different underlying epidemics.

The empirical and projection additions extend the framework from synthetic validation to practical use. Figure~\ref{fig:philippines-covid} demonstrates compact reconstruction of the full Philippine COVID-19 2020--2023 series without overfitting every daily artifact. Figure~\ref{fig:future-knots} demonstrates how retrospective compression can be bridged to a short-term projection by modelling future knots, while making clear that such a bridge is not a validated long-term forecast. For policy scenarios, future knots should ideally be modeled using exogenous covariates, as in Eq.~\eqref{eq:future-covariate-knot}, rather than projected only from past beta drift.

MPPC is not a new likelihood, spline basis, or general replacement for AIC, BIC, cross-validation, Gaussian-process models, change-point models, or standard posterior-predictive checking. Rather, it is a peak-aware posterior-predictive parsimony criterion: among candidate time-varying transmission models, it selects the smallest complexity whose replicated incidence trajectories satisfy prespecified joint requirements for whole-curve agreement and preservation of peak count, timing, height, and trough structure. Standard criteria target related but different objectives—such as average pointwise fit, predictive density, held-out forecasting accuracy, or generic smoothness—and may therefore select a model that has favorable RMSE, AIC/BIC, or cross-validation performance while merging, shifting, or creating epidemiologically important peaks. Conversely, MPPC does not claim universal superiority: its result depends on the declared feature definitions, tolerances, observation resolution, smoothing, and transmission basis. Its specific contribution is to make the scientific objective explicit when the question is not simply “which model fits or forecasts best?” but “what is the least dynamic flexibility required for a simple epidemic model to reproduce the broad epidemic-wave structure judged important for this analysis?”

The main contribution is not that a simple model can be forced to fit many waves. The contribution is a disciplined way to ask how simple the changing-transmission curve can be while still preserving the important wave pattern, and when that fitted curve should or should not be interpreted as mechanism.

\subsection{Interpretation: what does the fitted effective transmission mean?}
A time-varying effective transmission rate is a {lumped driver}. Depending on the omitted mechanisms, variation in $\beta(t)$ may encode contact intensity and mixing structure; interventions or voluntary behavior; seasonal environmental forcing; variant replacement or changing infectiousness; spatial importation and synchronization; depletion or replenishment not represented by the simple susceptible compartment; vaccination, immunity, testing, or ascertainment changes if the observation model is misspecified.

Consequently, counterfactual policy interpretation requires more than historical fit. If an intervention changes a hidden mechanism differently from the way that mechanism was compressed into historical $\beta(t)$, extrapolating the fitted effective transmission function may fail even when retrospective fit is excellent. The Philippine COVID-19 figures are therefore framed as descriptive reconstruction and projection-bridge demonstration, not as causal estimates of policy, mobility, variant, or vaccination effects.

A fitted $\beta(t)$ is like a single dial that moves because many hidden things moved at once. A good fit tells us the dial can summarize the curve; it does not reveal which hidden cause moved the dial.

\subsection{What data reduce non-uniqueness?}
Theorem~\ref{thm:gamma-nonident} gives a concrete data-design principle. Incidence alone constrains the inflow into $I$, but not the decomposition between infectious stock, removal, reporting, and transmission forcing. 

Cases alone are not enough to tell the whole story. To know whether a wave changed because of behavior, variants, reporting, immunity, or removal rate, we need extra measurements that directly touch those hidden pieces (for example, see \ref{nonuniq}.

\begin{longtable}{>{\raggedright\arraybackslash}p{0.27\textwidth}>{\raggedright\arraybackslash}p{0.30\textwidth}>{\raggedright\arraybackslash}p{0.35\textwidth}}
\caption{Tips: additions to progressively reduce ambiguity.}
\label{nonuniq}\\
\toprule
\textbf{Additional information} & \textbf{Main quantity constrained} & \textbf{Why it helps}\\
\midrule
Recovery/removal or isolation dates & $\gamma(t)$ / residence time & Separates disappearance from $I$ from changes in transmission.\\
Prevalence surveys & $I(t)$ & Directly anchors the infectious stock used in $\beta SI/N$.\\
Serology / cumulative infection & susceptible depletion and reporting & Constrains attack rate and reporting fraction.\\
Contact or mobility data & contact component of $\beta(t)$ & Gives an external predictor for forcing changes.\\
Variant frequencies & strain-composition component & Distinguishes behavior from pathogen replacement.\\
Vaccination and immune-history data & susceptibility and severity components & Separates changes in risk from changes in contact or pathogen transmissibility.\\
Age- or spatially stratified incidence & mixing heterogeneity & Tests whether a single effective forcing is stable across groups.\\
Intervention timing and adherence & policy-linked forcing & Supports, but does not by itself prove, causal attribution.\\
Testing volume and reporting delays & ascertainment component & Helps distinguish true incidence changes from surveillance artifacts.\\
\bottomrule
\end{longtable}

\subsection{Limitations}
Several limitations should be stated explicitly. First, the exact representability theorem assumes a closed SIR accounting window and feasible cumulative first-event incidence; long windows with extensive reinfection may violate that interpretation even if a trajectory fit remains numerically possible after rescaling. Second, the Shiny app and the bundled fast validation suite use a simplified oracle-centered adaptive ABC population approximation intended for exploration and stress testing, not as a substitute for a weighted, calibrated complex inference engine. Third, peak detection is resolution- and smoothing-dependent; MPPC must therefore be reported together with the peak rule. Fourth, observation artifacts can generate or erase peaks. Fifth, the effective $\beta(t)$ is not automatically transportable to interventions or future regimes. Sixth, the Philippine COVID-19 example uses reported cases rather than infection incidence, so testing intensity, reporting delays, and policy-dependent ascertainment are compressed into the effective forcing. Finally, model comparison against mechanistically richer alternatives remains important whenever the objective is causal or policy interpretation rather than compression.

The method is useful, but it is not magic. It can summarize a complicated curve, but hidden reporting changes and omitted mechanisms can still mislead interpretation and forecasting.

\section{Conclusion}
Multiple epidemic waves do not create a one-to-one correspondence between observed dynamical complexity and compartmental model complexity. A low-dimensional epidemic model can reproduce rich multi-wave trajectories when its effective forcing is allowed to vary, but representability alone is scientifically weak. The more informative questions are how much dynamic complexity is minimally required (quantified by the minimal peak-preserving complexity/MPPC $K^\star$), which epidemic features are preserved at that complexity (as assessed by the peak-aware discrepancy $D$), and what additional data are needed before the fitted forcing can be interpreted mechanistically.

In the final reference formulation, the descriptive model is
\begin{equation}
\frac{dS}{dt}=-\beta(t)\frac{SI}{N},\qquad
\frac{dI}{dt}=\beta(t)\frac{SI}{N}-\gamma I,
\qquad
\frac{dR}{dt}=\gamma I,
\label{eq:conclusion-sir}
\end{equation}
with positive time-varying transmission represented by
\begin{equation}
\beta(t)=\exp\left\{\sum_{j=1}^K\theta_j B_j(t)\right\},
\label{eq:conclusion-beta-basis}
\end{equation}
or, for the piecewise-linear log-knot implementation used in the R scripts and Shiny app,
\begin{equation}
\beta(t)=\exp\left\{(1-w_j(t))\theta_j+w_j(t)\theta_{j+1}\right\},\qquad
w_j(t)=\frac{t-\tau_j}{\tau_{j+1}-\tau_j},\quad t\in[\tau_j,\tau_{j+1}].
\label{eq:conclusion-beta-knots}
\end{equation}
For short-term projection, the future-knot bridge extends this with Eqs.~\eqref{eq:future-rw-knot}--\eqref{eq:future-covariate-knot}; for causal scenario analysis, the covariate-linked form is preferred because it states what future driver is being changed.

Peak-capacity analysis, incidence-based identifiability results, peak-aware likelihood-free inference, and minimal peak-preserving complexity provide a common framework for these questions. In this sense, the SIR model is used neither as a claim of mechanistic truth nor merely as a curve-fitting device, but as a controlled low-dimensional reference model against which the information content of epidemic waves can be measured. This perspective is relevant to model reduction, inverse problems, statistical model adequacy, infectious-disease surveillance, and public-health decision support wherever simple models are asked to summarize complex dynamics without concealing the limits of inference.

A simple SIR model with changing $\beta(t)$ can be a powerful summary tool. The responsible conclusion is not ``simple models are true,'' but ``simple models can tell us how much changing force is needed, and what extra evidence is needed before we claim why.''

\section{Code and data availability}
A companion Shiny application (\texttt{app.R}), the Philippine COVID-19 data file used here, and supplementary source code (\texttt{simulation.R}) accompany this study. The link to the online Shiny app is \url{https://jfrabajanteuplb.shinyapps.io/MULTIPEAK}. The interactive fitter uses a computationally lighter adaptive ABC approximation for responsiveness.

\section{Competing interests}
The author declares no competing interests.

\section{Ethics statement}

This study used publicly available, aggregate Philippine COVID-19 data together with synthetic and simulated datasets. No individual-level, identifiable, or private human-subject data were collected or analyzed, and no human participants were recruited for the study.

Generative artificial intelligence (GenAI), specifically GPT-5.6 Terra, was used as an assistive tool for language and grammatical editing and for code review and optimization. All GenAI-assisted outputs were critically reviewed, verified, and revised by the author where necessary. GenAI was not responsible for the conception of the study, development of the scientific framework, interpretation of the results, or scientific conclusions.

JFR conceived and developed the framework, formulated the mathematical models, designed and implemented the simulations, developed the Shiny app, analyzed and interpreted the results, and wrote and revised the manuscript. The scientific ideas, methodological choices, interpretations, conclusions, and final decisions presented in this work are those of the author, who assumes full responsibility and accountability for the integrity and accuracy of the manuscript.

\section{Funding}
The authors declare that no funds, grants, or other support were received during the preparation of this manuscript.

\bibliographystyle{plainnat}
\bibliography{references}

\appendix

\section{A beginner's guide to using peak-preserving SIR compression}
\label{app:beginner-guide}

This appendix gives a practical, non-technical guide to the method. The goal is
not to prove why an epidemic happened. The goal is to answer a narrower question:

\begin{quote}
{How much change in one effective transmission curve is needed for a simple
SIR model to reproduce the important waves in an observed case series?}
\end{quote}

The fitted transmission curve should be interpreted as an {effective}
summary. It can reflect several real-world changes at the same time, including
behavior, interventions, variants, testing, immunity, reporting, mobility, and
unmeasured heterogeneity.

\subsection{What the user needs}

The basic input is a time series with two columns:

\begin{center}
\begin{tabular}{ll}
\toprule
Day or date & Reported new cases\\
\midrule
1 & 12\\
2 & 15\\
3 & 18\\
$\vdots$ & $\vdots$\\
\bottomrule
\end{tabular}
\end{center}

The observations should be approximately equally spaced, for example daily or
weekly counts. Missing dates should be recorded explicitly rather than silently
removed. The method can be used for COVID-19, influenza, dengue, measles, or
another disease with a sufficiently resolved incidence series.

\subsection{Step 1: State the purpose of the analysis}

Before fitting anything, decide what the model is intended to do.

\begin{enumerate}
    \item \textbf{Trajectory compression:} summarize a complicated epidemic curve
    using a simple SIR model with time-varying transmission.
    
    \item \textbf{Peak description:} identify the main waves, their timing, and
    their relative sizes.
    
    \item \textbf{Short-horizon scenario exploration:} examine what may happen
    if future transmission declines, remains stable, or increases because of
    specified new information.
\end{enumerate}

The method alone should {not} be used to claim that a particular policy,
variant, or behavioral change caused a particular wave. Such causal claims need
additional information.

\subsection{Step 2: Inspect the reported cases}

Plot the raw reported cases first. Daily reporting often contains weekend
effects, delayed reports, and isolated spikes. For this reason, smoothing may be
used to identify broad waves, but the raw data should still be retained and
shown.

For example, suppose daily reported cases rise around days 30, 95, and 165.
After applying a 7-day moving average, the series may show three broad peaks.
The analyst then treats the series as a {three-wave target}, not because
three was imposed in advance, but because the same stated peak-detection rule
identified three prominent maxima.

A useful report should state:

\begin{itemize}
    \item the observation interval, such as one day;
    \item the smoothing rule, such as a 7-day moving average;
    \item the minimum separation required between peaks;
    \item the peak-prominence rule; and
    \item whether fitting is evaluated against raw or smoothed cases.
\end{itemize}

\subsection{Step 3: Choose a simple SIR accounting model}

The model tracks susceptible, infectious, and removed individuals:
\[
\frac{\dd S}{\dd t}=-\beta(t)\frac{SI}{N},\qquad
\frac{\dd I}{\dd t}=\beta(t)\frac{SI}{N}-\gamma I,\qquad
\frac{\dd R}{\dd t}=\gamma I.
\]

Here, $N$ is the accounting population, $\gamma$ is the removal rate, and
$\beta(t)$ is the effective transmission rate. A practical first choice is to
set $\gamma=1/7$ per day, corresponding to an average infectious period of
about seven days. This is an assumption, not a quantity identified by cases
alone.

For a reported-case series, choose plausible initial values $S_0$ and $I_0$.
Usually, $S_0$ is large relative to the observed cases and $I_0$ is a small
positive number. Sensitivity analysis should be performed when these choices
materially affect the fit.

\subsection{Step 4: Let transmission change only at a small number of knots}

Instead of allowing a different unrelated transmission value every day, choose
$K$ knot times:
\[
\tau_1,\tau_2,\ldots,\tau_K.
\]
The model estimates one log-transmission value at each knot,
\[
\theta_j=\log\{\beta(\tau_j)\},
\]
and connects adjacent values linearly on the log scale:
\[
\beta(t)
=
\exp\left[
(1-w_j(t))\theta_j+w_j(t)\theta_{j+1}
\right],
\qquad
w_j(t)=\frac{t-\tau_j}{\tau_{j+1}-\tau_j}.
\]

More knots allow
the curve to bend more often. Fewer knots produce a simpler and smoother
transmission history.

\subsection{Step 5: Fit several candidate complexities}

Do not choose $K$ only because the epidemic has a certain number of peaks.
Instead, fit a sequence of candidate values, for example
\[
K=4,\ 6,\ 8,\ 10,\ 12,\ 16,\ 20.
\]

For each value of $K$:

\begin{enumerate}
    \item fit the knot values $\theta_1,\ldots,\theta_K$;
    \item simulate the SIR model forward;
    \item compare simulated incidence with the observed target;
    \item count the detected peaks in both curves;
    \item compare peak times, peak heights, troughs, and overall curve error.
\end{enumerate}

The smallest $K$ that meets the stated adequacy rule is called the minimal
peak-preserving complexity:
\[
K^\star
=
\min\left\{
K:\text{the fitted trajectory is adequate}
\right\}.
\]

\subsection{Step 6: Decide whether a fit is adequate}

A fit should not be accepted solely because it has a low average error. It
should also preserve the epidemic features that matter. A simple adequacy rule
may require all of the following:

\begin{enumerate}
    \item the fitted and observed curves have the same number of major peaks;
    \item the average peak-time error is no greater than a chosen tolerance,
    for example 7 days;
    \item the average relative peak-height error is below a chosen tolerance;
    \item the whole-curve log-scale normalized RMSE is below a chosen threshold;
    \item the fitted curve does not create implausible extra waves.
\end{enumerate}

\paragraph{Illustrative example.}
Suppose the observed smoothed series has three broad peaks. A $K=6$ fit may
reproduce only two peaks, while a $K=10$ fit reproduces all three peaks but
places the third peak 20 days too late. If $K=14$ reproduces three peaks, has a
peak-time mean absolute error of 4 days, and meets the curve-error threshold,
then $K^\star=14$ under that particular rule.

This does {not} mean that the real epidemic had 14 biological mechanisms.
It means that 14 knot values were the smallest tested transmission representation
that adequately preserved the chosen epidemic features.

\subsection{Step 7: Read the plots correctly}

A standard output contains an incidence plot and a transmission plot.

\paragraph{Incidence plot.}
Compare the observed cases with the fitted SIR incidence. Ask:

\begin{itemize}
    \item Are the major peaks present?
    \item Are their dates approximately correct?
    \item Are their relative heights reasonable?
    \item Does the fitted curve avoid following every one-day reporting spike?
\end{itemize}

\paragraph{Transmission plot.}
Read $\beta(t)$ as the net transmission pressure required by the simple SIR
model. An increase in $\beta(t)$ means that the fitted model needs stronger
effective transmission to reproduce later growth. It does not identify the
specific reason for that increase.

\paragraph{Knot dots.}
Each knot dot is one fitted knot. Many closely spaced dots suggest that the
trajectory requires more temporal detail under the stated fitting rules.
However, this may reflect reporting noise, narrow peaks, smoothing choices, or
the observation interval---not necessarily a larger number of real mechanisms.

\subsection{Step 8: Report the result responsibly}

A concise reporting template is:

\begin{quote}
Using daily reported cases from [start date] to [end date], we detected
[number] broad epidemic waves using [smoothing rule], [minimum separation], and
[prominence rule]. A time-varying SIR model with a piecewise-linear
log-transmission curve was fitted across candidate knot dimensions. Under the
prespecified peak-preserving adequacy criteria, the smallest adequate
representation was $K^\star=[value]$. The fitted $\beta(t)$ is interpreted as
effective transmission required for trajectory compression and is not a direct
estimate of a single biological or policy mechanism.
\end{quote}

\subsection{Common mistakes to avoid}

\begin{enumerate}
    \item Do not interpret each knot as a real intervention or variant event.
    \item Do not claim that the number of knots equals the number of waves.
    \item Do not treat a low RMSE as sufficient when peak timing is poor.
    \item Do not use smoothed data to hide major disagreement with raw data.
    \item Do not claim that cases alone identify both $\beta(t)$ and $\gamma$.
    \item Do not present long-horizon forecasts without explicit future
    assumptions and uncertainty.
\end{enumerate}

\subsection{Step 9 (Optional): Make a short-term forecast from the final fitted knot}

A retrospective fit explains the past shape of the series; it does not
automatically predict the future. A short-horizon projection should therefore
state its future assumption.

A short-term forecast begins at the last observed day, say $T$. The fitted
historical knots up to day $T$ are kept fixed. New {future knots} are then
placed after $T$, for example every 15 days over the next 60 days.

There are two useful scenarios.

\paragraph{Scenario A: no advance information.}
Continue the recent transmission direction or pattern with uncertainty. If recent transmission has
been declining, future log-transmission knots may be allowed to decline by a
justifiable amount. This is not a claim that transmission will certainly decline; it
is a transparent baseline assumption.

\paragraph{Scenario B: advance information about a new variant.}
Specify that a more transmissible variant is expected to arrive after a stated number of days and increase transmission by a stated multiplier. Suppose surveillance gives advance warning that a more transmissible variant
will become dominant. Let $r_{\mathrm{var}}(t)$ be the expected fraction of
transmission attributable to the new variant, increasing smoothly from near
zero to near one. The scenario transmission curve can be written as
\[
\beta_{\mathrm{var}}(t)
=
\min\left\{
\beta_{\max},
\,
\beta_{\mathrm{base}}(t)
\left[1+(m_{\mathrm{var}}-1)r_{\mathrm{var}}(t)\right]
\right\},
\]
where $\beta_{\mathrm{base}}(t)$ is the no-information continuation,
$m_{\mathrm{var}}>1$ is the assumed transmission multiplier, and
$\beta_{\max}$ is a declared upper bound.

The forecast is obtained by inserting either
$\beta_{\mathrm{base}}(t)$ or $\beta_{\mathrm{var}}(t)$ into the same forward
SIR equations:
\[
c_t=\beta(t)\frac{S_tI_t}{N},
\]
\[
S_{t+1}=S_t-c_t,\qquad
I_{t+1}=I_t+c_t-\gamma I_t,\qquad
R_{t+1}=R_t+\gamma I_t.
\]

The no-information scenario asks, ``What follows if the recent transmission
pattern broadly continues?'' The variant-informed scenario asks, ``What follows
if the specified variant assumption is correct?'' Neither scenario is a
guaranteed forecast; both are conditional projections.

If the variant-informed scenario projects growth while the no-information
scenario projects decline, the correct conclusion is not that the second
forecast is ``better'' in general. The conclusion is that advance information
about a changing driver can qualitatively change the projected epidemic path.

\subsection{Worked numerical example: a three-wave epidemic}

Consider 181 daily observations from a hypothetical town with accounting
population
\[
N=1{,}000{,}000,\qquad
S_0=995{,}000,\qquad
I_0=500,\qquad
R_0=4{,}500,
\]
and fixed removal rate
\[
\gamma=\frac{1}{7}=0.1429\ {\rm day}^{-1}.
\]
Thus, the assumed mean infectious period is 7 days.

Suppose the observed daily reported cases have three broad waves. A small
extract is shown in Table~\ref{tab:worked-data}. In practice, the complete
daily series, not only these selected days, is supplied to the fitting
procedure.

\begin{table}[H]
\centering
\caption{Selected observations from an illustrative 181-day, three-wave
incidence series.}
\label{tab:worked-data}
\small
\begin{tabular}{@{}rrrrrrrrr@{}}
\toprule
Day & 1 & 15 & 30 & 45 & 60 & 75 & 90 & 105\\
Cases & 18 & 84 & 412 & 96 & 41 & 138 & 628 & 174\\
\midrule
Day & 120 & 135 & 150 & 165 & 180 & & & \\
Cases & 52 & 110 & 481 & 126 & 29 & & & \\
\bottomrule
\end{tabular}
\end{table}

Using a 7-day moving average, a minimum peak separation of 25 days, and a
prominence threshold of 10\% of the maximum smoothed incidence, the algorithm
detects three broad peaks:

\[
\begin{array}{c|ccc}
\text{Peak number} & 1 & 2 & 3\\
\hline
\text{Observed peak day} & 30 & 90 & 150\\
\text{Observed peak height} & 412 & 628 & 481
\end{array}
\]

\paragraph{Step 1: Fit candidate numbers of knots.}

We fit the same SIR model using $K=4$, $K=6$, $K=8$, and $K=10$ knots. The
adequacy rule requires:

\begin{enumerate}
    \item exactly three detected peaks;
    \item log-scale normalized RMSE no greater than 0.20;
    \item peak-time mean absolute error no greater than 7 days; and
    \item peak-height mean absolute relative error no greater than 0.35.
\end{enumerate}

The results are shown in Table~\ref{tab:worked-k-search}.

\begin{table}[H]
\centering
\caption{Candidate-complexity search for the illustrative three-wave example.}
\label{tab:worked-k-search}
\small
\setlength{\tabcolsep}{4pt}
\renewcommand{\arraystretch}{1.12}
\begin{tabularx}{\textwidth}{@{}rrrrrr>{\raggedright\arraybackslash}X@{}}
\toprule
$K$ &
Fitted peaks &
log-NRMSE &
Peak-time MAE &
Peak-height MARE &
Adequate? &
Interpretation\\
\midrule
4  & 2 & 0.284 & 18.0 & 0.412 & No &
Second and third waves merge\\
6  & 3 & 0.221 & 9.3  & 0.301 & No &
Peaks found, but timing is poor\\
8  & 3 & 0.156 & 3.7  & 0.184 & Yes &
Smallest adequate fit\\
10 & 3 & 0.131 & 2.3  & 0.141 & Yes &
Slightly closer but more complex\\
\bottomrule
\end{tabularx}
\end{table}

Because $K=8$ is the first candidate that meets all four criteria,
\[
K^\star=8.
\]
The $K=10$ model gives a somewhat smaller error, but it is not selected because
the simpler $K=8$ model is already adequate.

\paragraph{Step 2: Inspect the fitted transmission knots.}

For the selected $K^\star=8$ fit, suppose the estimated transmission knots are
shown in Table~\ref{tab:worked-knots}. These are the knot dots in the compact
transmission plot.

The knot values are joined by straight lines on the log-transmission scale.
Because the first seven knot intervals have width 26 days and the final interval
has width 24 days, the fitted log-transmission curve is

\[
\widehat{\theta}(t)=\log\{\widehat{\beta}(t)\}
=
\begin{cases}
-2.303+0.0129615(t-1), & 1\le t\le27,\\
-1.966+0.0318077(t-27), & 27<t\le53,\\
-1.139-0.0377308(t-53), & 53<t\le79,\\
-2.120+0.0443077(t-79), & 79<t\le105,\\
-0.968-0.0476538(t-105), & 105<t\le131,\\
-2.207+0.0385769(t-131), & 131<t\le157,\\
-1.204-0.0501667(t-157), & 157<t\le181.
\end{cases}
\]

The actual positive transmission function used by the SIR model is therefore

\[
\widehat{\beta}(t)
=
\begin{cases}
\exp\{-2.303+0.0129615(t-1)\}, & 1\le t\le27,\\
\exp\{-1.966+0.0318077(t-27)\}, & 27<t\le53,\\
\exp\{-1.139-0.0377308(t-53)\}, & 53<t\le79,\\
\exp\{-2.120+0.0443077(t-79)\}, & 79<t\le105,\\
\exp\{-0.968-0.0476538(t-105)\}, & 105<t\le131,\\
\exp\{-2.207+0.0385769(t-131)\}, & 131<t\le157,\\
\exp\{-1.204-0.0501667(t-157)\}, & 157<t\le181.
\end{cases}
\]

For example, day 90 lies between knot days 79 and 105. Hence,

\[
\widehat{\theta}(90)
=
-2.120+0.0443077(90-79)
=
-1.6326,
\]
so that
\[
\widehat{\beta}(90)
=
\exp(-1.6326)
=
0.1954\ {\rm day}^{-1}.
\]

This value is then inserted into the SIR equations. With a one-day numerical
time step, the corresponding update is

\[
c_t
=
\widehat{\beta}(t)\frac{S_tI_t}{N},
\]
\[
S_{t+1}=S_t-c_t,
\qquad
I_{t+1}=I_t+c_t-\gamma I_t,
\qquad
R_{t+1}=R_t+\gamma I_t.
\]

Thus, the knots do not directly represent daily case counts. They define a
smooth, positive transmission curve, and the SIR equations convert that curve
into simulated daily incidence.

\begin{table}[H]
\centering
\caption{Illustrative fitted log-transmission knots for the selected
$K^\star=8$ model.}
\label{tab:worked-knots}
\small
\begin{tabular}{@{}rrrr@{}}
\toprule
Knot $j$ & Knot day $\tau_j$ & $\widehat{\theta}_j=\log\widehat{\beta}(\tau_j)$ &
$\widehat{\beta}(\tau_j)$ per day\\
\midrule
1 & 1   & -2.303 & 0.100\\
2 & 27  & -1.966 & 0.140\\
3 & 53  & -1.139 & 0.320\\
4 & 79  & -2.120 & 0.120\\
5 & 105 & -0.968 & 0.380\\
6 & 131 & -2.207 & 0.110\\
7 & 157 & -1.204 & 0.300\\
8 & 181 & -2.408 & 0.090\\
\bottomrule
\end{tabular}
\end{table}

The fitted $\widehat{\beta}(t)$ rises before each simulated wave and falls
between waves. For example, it increases from about $0.120$ per day near day
79 to $0.380$ per day near day 105. This does {not} prove that one
particular real-world event occurred on day 105. It only says that the simple
SIR model requires stronger effective transmission in that period to reproduce
the observed second wave.

\paragraph{Step 3: Compare observed and fitted peaks.}

For the selected $K^\star=8$ fit, the peak comparison may be:

\[
\begin{array}{c|ccc}
& \text{Peak 1} & \text{Peak 2} & \text{Peak 3}\\
\hline
\text{Observed peak day} & 30 & 90 & 150\\
\text{Fitted peak day} & 32 & 86 & 153\\
\text{Observed height} & 412 & 628 & 481\\
\text{Fitted height} & 394 & 655 & 450
\end{array}
\]

The peak-time errors are $2$, $4$, and $3$ days, giving
\[
\text{peak-time MAE}
=
\frac{2+4+3}{3}
=
3.0\ {\rm days}.
\]

The relative peak-height errors are
\[
\frac{|394-412|}{412}=0.044,\qquad
\frac{|655-628|}{628}=0.043,\qquad
\frac{|450-481|}{481}=0.064.
\]
Their mean is
\[
\text{peak-height MARE}
=
\frac{0.044+0.043+0.064}{3}
=
0.050.
\]

Therefore, this fit preserves the number, timing, and approximate size of the
three main waves while using only eight transmission knots.

For this example, the result is a simple SIR model with eight time-varying transmission knots was sufficient to
reproduce the three important epidemic waves. The result summarizes the
changing effective transmission needed by the data; it does not identify the
specific biological, behavioral, or policy causes of the waves.

\paragraph{Step 3 (Optional): Numerical short-term forecast from day 181.}

Suppose day 181 is the forecast boundary. The final fitted historical
log-transmission knot is
\[
\widehat{\theta}(181)=-2.408,
\qquad
\widehat{\beta}(181)=\exp(-2.408)=0.0900\ {\rm day}^{-1}.
\]

We place four future knots at days 196, 211, 226, and 241. For the
no-advance-information scenario, suppose the recent declining pattern is
continued conservatively by decreasing log-transmission by $0.050$ at every
15-day future knot:
\[
\theta_{196}^{+}=-2.458,\qquad
\theta_{211}^{+}=-2.508,\qquad
\theta_{226}^{+}=-2.558,\qquad
\theta_{241}^{+}=-2.608.
\]

Thus, the no-information future interpolation is

\[
\widehat{\theta}_{\mathrm{base}}(t)
=
\begin{cases}
-2.408-0.0033333(t-181), & 181<t\le196,\\
-2.458-0.0033333(t-196), & 196<t\le211,\\
-2.508-0.0033333(t-211), & 211<t\le226,\\
-2.558-0.0033333(t-226), & 226<t\le241,
\end{cases}
\]
and
\[
\widehat{\beta}_{\mathrm{base}}(t)
=
\exp\{\widehat{\theta}_{\mathrm{base}}(t)\}.
\]

The resulting future knot values are

\[
\begin{array}{c|ccccc}
\text{Day} & 181 & 196 & 211 & 226 & 241\\
\hline
\widehat{\beta}_{\mathrm{base}}(t)
&0.0900&0.0856&0.0815&0.0775&0.0737
\end{array}
\]

This scenario projects a gradual decline in effective transmission.

\paragraph{Variant-informed forecast.}

Now suppose advance information indicates that a new variant will replace the
existing variant. Let its expected replacement fraction be

\[
r_{\mathrm{var}}(t)
=
\frac{1}{1+\exp\{-(t-186)/4\}}.
\]

This logistic curve has midpoint day 186: at that day,
$r_{\mathrm{var}}(186)=0.5$. Suppose the new variant has a 1.75-fold
transmission advantage and impose the declared bound
$\beta_{\max}=0.19$ per day. The variant-informed transmission curve is

\[
\widehat{\beta}_{\mathrm{var}}(t)
=
\min\left[
0.19,\,
\widehat{\beta}_{\mathrm{base}}(t)
\left\{1+0.75r_{\mathrm{var}}(t)\right\}
\right].
\]

For example, on day 196,

\[
r_{\mathrm{var}}(196)
=
\frac{1}{1+\exp\{-(196-186)/4\}}
=
0.9241,
\]
so that
\[
\widehat{\beta}_{\mathrm{var}}(196)
=
0.0856\{1+0.75(0.9241)\}
=
0.1449\ {\rm day}^{-1}.
\]

Because
\[
\widehat{\beta}_{\mathrm{var}}(196)
>
\gamma
=
0.1429\ {\rm day}^{-1},
\]
and because $S_{196}/N$ is still close to one in this illustrative example,
the variant-informed scenario can produce renewed infectious growth. In
contrast,
\[
\widehat{\beta}_{\mathrm{base}}(196)
=
0.0856
<
0.1429,
\]
so the no-information scenario continues to decline.

For each day after the boundary, both scenarios use exactly the same SIR
forward update:
\[
c_t^{(s)}
=
\widehat{\beta}_{s}(t)\frac{S_t^{(s)}I_t^{(s)}}{N},
\qquad
s\in\{\mathrm{base},\mathrm{var}\},
\]
\[
S_{t+1}^{(s)}=S_t^{(s)}-c_t^{(s)},
\]
\[
I_{t+1}^{(s)}
=
I_t^{(s)}+c_t^{(s)}-\frac{1}{7}I_t^{(s)},
\]
\[
R_{t+1}^{(s)}
=
R_t^{(s)}+\frac{1}{7}I_t^{(s)}.
\]

Therefore, the two forecast paths differ only in their explicitly stated
future transmission assumptions. The historical fit is unchanged.

\end{document}